\documentclass{article}
\usepackage{graphicx} % Required for inserting images
\usepackage[margin=0.75in]{geometry} %added this to make things a little easier to read
\usepackage{amsmath}
\usepackage{authblk}

\usepackage{amssymb}
\usepackage{amsthm}
\usepackage{tikz}
\newtheorem{theorem}{Theorem}[section]

\newtheorem{observation}[theorem]{Observation}
\newtheorem{remark}[theorem]{Remark}
\newtheorem{lemma}[theorem]{Lemma}
\theoremstyle{definition}
\newtheorem{definition}{Definition}[section]
\newtheorem*{problem*}{Decision Problem}
\newtheorem{example}{Example}[section]
\usepackage{scalerel} % for scalable symbols

\usetikzlibrary{automata, positioning, arrows, shapes, matrix, decorations.pathreplacing}
\tikzset{
->, % makes the edges directed
node distance=3.5cm, % specifies the minimum distance between two nodes. Change if necessary.
every state/.style={thick, fill=gray!10}, % sets the properties for each ’state’ node
initial text=$ $, % sets the text that appears on the start arrow
}

\usepackage{hyperref}
\hypersetup{
    colorlinks=true,
    linkcolor=blue,
    filecolor=magenta,      
    urlcolor=cyan
}
\title{A Numerical Approach to the Realizability Problems for Memoryless Nash and Epsilon Equilibria in Concurrent Multiplayer Reachability Games }
\author[1]{Senthil Rajasekaran}
\author[1]{Jean-François Raskin}
\author[2]{Moshe Y. Vardi}

\affil[1]{Formal Methods Group, Université Libre de Bruxelles}
\affil[2]{Department of Computer Science, Rice University}

\affil[ ]{\small \texttt{senthil.rajasekaran@ulb.be}, \texttt{jean-francois.raskin@ulb.be}, \texttt{vardi@rice.edu}}

\begin{document}

\maketitle

\section{Introduction}\label{sec:intro}

Equilibria are mathematical objects from the field of game theory~\cite{Osborne1994} that serve as solution concepts for multiplayer games. As such, there is great interest in analyzing the computational complexity of decision problems related to equilibria in multiplayer games, resulting in the field of \emph{algorithmic game theory}~\cite{roughgarden2010algorithmic}. Recently, there has also been interest in the computational complexity of equilibria from the formal methods community. There, the high-level idea is to use the computational complexity of equilibria as a proxy to measure the computational bottlenecks that arise when distributed systems are scaled. This created the relatively new subfield of \emph{equilibrium analysis}, which is also called \emph{rational verification}~\cite{principlesofmodelcheckingbook,AGHKNPSW21,rajasekaranthesis}.

Although these two fields share a similar goal, they study different settings that generally operate under different assumptions. As a subfield of formal verification, equilibrium analysis often considers \emph{state-based} games in which player actions update a global game state over time, as these state-based games represent the execution of a system with multiple global states. The players are then motivated to play the game in a way that creates a desirable sequence of observed global states. As such, players' goals can be quite complex, and are often specified using temporal logics~\cite{AGHKNPSW21}. On the other hand, there is a strong preference for deterministic settings in the equilibrium analysis literature, see~\cite{RV21,RV22,RBV23,RV26,GutierrezNPW20,iBG,bouyer2010nash,bouyer2011nash,ummels2015pure,BBMU15} for a few examples. More recently, the field has been moving towards considering probabilistic settings, see~\cite{ratverprob,RRV26a,RV25,parkerverification} for a few examples.

On the other hand, algorithmic game theory focuses almost exclusively on probabilistic non-state-based systems. As a result, algorithmic game theory has a strong preference for probabilistic settings where equilibria are mathematically guaranteed to exist. Perhaps the most famous example of this is the paper that introduced the complexity class PPAD, meant to characterize search problems in which a witness is guaranteed to exist ~\cite{DGHppad}. The introduction of PPAD has led to countless follow-up works in algorithmic game theory. Although considering state-based games in algorithmic game theory is relatively rare, when this is done, efforts are often made to add in factors that preserve the existence of equilibria, such as the use of a discount factor~\cite{JMS23}.

The goal of this paper is to study a problem setting that lies at the intersection of both fields, yet has been somewhat neglected. In this paper, we study probabilistic concurrent state-based games in which each player has a reachability goal (and, therefore, an expected probability of achieving that goal), and we consider the Nash equilibria and the $\varepsilon$-Nash equilibria, the two most widely used equilibrium concepts in the literature. Critically, we do not introduce a discount factor or a finite-time horizon, and so the existence of an equilibrium is not guaranteed. This condition allows us to meaningfully study the \emph{realizability} problem, which takes a game as input and asks whether it has an equilibrium. While there has been some work in this setting, which we outline in Section~\ref{sec:stateoftheart}, the literature is quite disorganized, making it very difficult to tell what is known. As we explain in Section~\ref{sec:stateoftheart}, the high-level answer is ``very little''.

This setting is interesting precisely because, by being state-based, probabilistic, and not guaranteed to have an equilibrium, it represents a novel setting that the literature has increasingly moved away from.  This is partly because realizability problems in this setting have proven very challenging to reason about, a point we justify in more detail in Section~\ref{sec:stateoftheart}. Even further, representing a probabilistic state-based game presents a significant challenge on its own, a point we address by introducing a novel model of transition function representation in Section~\ref{sec:prelim}. The goal of this paper is to provide a new approach of analysis in this challenging setting by bounding the size of the strategies that constitute a potential equilibrium in two ways. The first restriction we make is to consider equilibria that are made of \emph{memoryless} strategies. This is a fairly common restriction in the literature, as memoryless strategies are often valued for their beneficial mathematical properties -- c.f.~\cite{JMS23,boundedpartialinformationconcurrency,recurisveetr,stationaryepsinturnbased,etessamihandbook}, and the paper~\cite{etrstatne} should be especially noted for its relevance to this work. The second restriction concerns the representation of numbers used to model player strategies, which, to the best of our knowledge, is novel, and the main conceptual contribution of this paper. The new approach put forward in this paper is to explicitly study the representation of the strategies that constitute an equilibrium by carefully considering the field in which the numbers that constitute the strategy lie. It should be noted that~\cite{etrstatne} studies this problem without numerical restrictions, a point we discuss in much more detail in Section~\ref{sec:noell}.

First, we consider the field of rationals. This field is easy to work with and admits many beneficial complexity-theoretic properties. Nash himself, however, showed that there are games that only have irrational Nash equilibria~\cite{Nash51}, and so considering just rational numbers is not sufficient. This leaves us with an important question -- how can we reason about irrational numbers algorithmically?

We do this by considering field extensions of $\mathbb{Q}$ and working with algebraic irrational numbers symbolically, through their closed-form radical representations (irradical numbers are not considered in this step). We then show that if the field extension is represented by its basis rather than its extension elements, we can preserve the same complexity as in the rational case. As the size of the basis can be exponential in the number of the extension elements, this provides us with a curious search landscape -- if we know ``where'' to look by considering specific irrational numbers, we can solve the realizability problems quite easily. On the other hand, if we move towards generality and use a large extension, computing the basis itself becomes a significant bottleneck. This idea is more important in the analysis of $\varepsilon$-equilibria, a point we develop further in Section~\ref{sec:discussion}.  Yet this is still not enough, as there are some games that admit Nash equilibria that utilize irradical probabilities~\cite{irradicalprobsNash}. Although irradical numbers are algebraic, they cannot be represented by radical expressions and, therefore, introduce new issues of representation. Finally, to move past issues of representation, we consider the full field $\mathbb{R}$, which, unlike with $\mathbb{Q}$ and its extensions, we reason about through the use of logic -- namely, the existential theory of the reals $\exists\mathbb{R}$. This most general setting also results in the most prohibitive complexity.

When taken together, these results imply that numerical representation plays a very interesting and understated role in both the computational complexity of the realizability problems and the difficulties researchers have faced in characterizing their mathematical properties in the reachability goal setting.

\section{Preliminaries}\label{sec:prelim}

The main object of study in this paper is the probabilistic concurrent game graphs (henceforth, \emph{games}). 
\begin{definition}
A game is a 6-tuple $\mathbb{G}=\langle  Q, q_0,\Pi,A,\delta,(R_i)_{i\in \Pi} \rangle$, specified by the following components:
  \begin{itemize}
    \item A finite non-empty set of states $Q$, with $q_0 \in Q$ as the initial state.
    \item A finite non-empty set of players $\Pi$. As a matter of notational convenience, we say that there are $m$ players, i.e. $|\Pi| = m$.
    \item A finite non-empty set of action names $A$. Given the action set $A$ and the set of players $\Pi$, we define the set of \emph{joint actions} $A^{\Pi}$.  Players in a game choose actions concurrently, and a joint action is the result of all players selecting actions in $A$ concurrently.
    \item $\delta:Q \times A^{\Pi}  \rightarrow {\sf Dist}(Q)$ is the probabilistic transition relation -- 
    a function that maps states and joint actions to a distribution over states representing potential successor states.
    \item The set $R_i \subseteq Q $ is the \emph{reachability goal} of Player~$i$. 
  \end{itemize}

\end{definition}

The game starts in state $q_0 \in Q$, the initial state. Players then concurrently chose an action from the set of action names $A$. Once each player has chosen an action, the \emph{joint action}, an element of $a \in A^{\Pi}$, along with the transition function $\delta$ determines a distribution $\delta(q_0,a)$ over the next state. This distribution is then sampled in order to determine the next state. The entire process then repeats for an infinite number of rounds. This creates an infinite sequence of states called a \emph{trace} $t \in Q^{\omega}$. Since the progression of this game involves moving from state to state, we call it \emph{state-based}.
  
The transition function $\delta : Q \times A^{\Pi} \rightarrow{\sf Dist}(Q)$ is the largest part of the input of $\mathbb{G}$, as it represents a construction that is exponential in the number of players in the worst-case scenario. For this reason, the representation of $\delta$ plays a paramount role in our overall complexity-theoretic analysis. It also presents us with a unique challenge: we must unify two highly desirable properties in a single model.
\begin{enumerate}
    \item The computational complexity of all relevant queries on our transition table model should be in PTIME, so as not to overpower the overall complexity of our realizability problems.
    \item The model should be able to succinctly represent certain important subfamilies of games, especially turn-based games. The use of succinctly represented turn-based games in particular will allow us to connect our model to previous works in the literature, such as~\cite{etrstatne}.
\end{enumerate}

If $\delta$ were represented as an explicit table, it would contain $|Q| \cdot |A|^m$ rows, each of which maps to a distribution over $Q$. This explicit representation, however, does not allow for transition functions that admit succinct representations, and so we lose Property 2. On the other hand, if we were to use a Bayesian Network, this would mean that simply querying our model for a probability would be PP-hard~\cite{bayesianinferencecomplexity} (Probabilistic Polynomial Time, see~\cite{papacomplexity}), meaning that we would lose Property 1. In this paper, we model $\delta$ as a \emph{compressible table}, a new model that threads the needle and satisfies both our properties simultaneously. The compressible table allows us to capture certain succinct encodings (but not all, as then it would be equivalent to a Bayesian Network), most notably those for turn-based games, while still maintaining PTIME query complexity. 

\begin{definition}[Compressible Table]\label{compressibletabledef}
A compressible table is a table representation of $\delta$ that allows for the use of a ${\ast}$ ``don't care'' character.  For example, a row labeled by $\langle q, a{\ast}{\ast}{\ast}{\ast} \rangle$ in a 5-player game corresponds to the distribution from the state $q$ of all joint actions where player one chooses action `a', regardless of the choices of players two through five. Each row is then associated with a distribution over the state space $Q$, represented as a list mapping states to probabilities, written as rational numbers. Specifically, they are given as ratios, with both the numerator and the denominator expressed in binary.
\end{definition} 

\begin{remark} 
For the rest of the paper, when we mention representing a rational number, we mean representing its numerator and denominator in binary, with a special character separating the two. 
\end{remark} 
Given a state $q \in Q$, we denote the set of rows associated with $q$ in the compressible table $\delta$ by $R_\delta(q)$. Note that since we are allowed to use the $*$ symbol, it is possible that $|R_\delta(q)| < A^m$, where $A^m$ is the number of rows an explicit table representation would use.
\begin{definition}[The Specification of $R_\delta(q)$]
    In order to make the compressible table unambiguous, we require that there never be two rows in $R_\delta(q)$ that can encode the same joint action. For example, the rows ${\ast} a$ and $a {\ast}$ both encode the tuple `$aa$' and cannot therefore appear in the same $R_\delta(q)$. Furthermore, we ensure that $R_\delta(q)$ is total by adding a final blank row to each $R_\delta(q)$. This blank row describes the transitions for tuples that do not match another row in $R_\delta(q)$.
\end{definition}

As mentioned before, \emph{turn-based} games, in which only one player chooses an action at each state, are a highly popular model in the literature, cf.~\cite {stationaryepsinturnbased,recurisveetr}. We now show how to encode the transition function of a turn-based game using a compressible table.
\begin{remark}[Turn-Based Game Encoding]\label{rem:turn-based-encoding}
    In a turn-based game, each state $q \in Q$ ``belongs'' to at most one player. If a state belongs to no player, then the compressible table easily captures the transition information at this state by utilizing either an all ``${\ast}$'' row or the final blank row. At a state $q_1$ belonging to Player $1$, for example, the rows $a {\ast} {\ast} \ldots$ $b {\ast}  {\ast}\ldots $  capture the property that only the action choice of Player $1$ (i.e. $'a'$,'$b$', or another option ) influences the transition at state $q_1$. Note that in our formulation, all players still choose an action, but the don't-care ${\ast}$ symbol means that these choices are not taken into consideration by the transition function. 
\end{remark}

The compressible table is therefore able to capture the succinct representation of turn-based games, and we prove PTIME query complexity in the upcoming technical sections.  Furthermore, it can also capture the representation of  ``b-bounded concurrent game systems"~\cite{RV25}, a recently introduced generalization of turn-based games. Since the compressible table model considers only certain player actions at a given state, it is well-suited to the Nash equilibrium solution concept, in which players choose their actions independently. As mentioned before, however, it cannot capture every succinct representation, as this would make it essentially a Bayesian Network. There are, for example,  many important subclasses of games that the compressible table cannot model succinctly, such as the iBG~\cite{iBG}, a system in which the joint action space and the state space are the same set and the transition function is simply right projection. We stress the importance of considering the representation of the transition table in multiplayer games, and now provide the interested reader with two follow-up references on this point. For further detail on the use of Bayesian Network models in probabilistic concurrent games, see~\cite{RRV26a}. For a thorough discussion on the role of succinct representations in \emph{deterministic} games, see~\cite{RV26}.

Players are motivated to choose their actions in such a way as to satisfy their goal. The goal of Player~$i$ is to visit a state that belongs to the set $R_i \subseteq Q$. If the execution of the system results in a trace $t$ that visits $R_i$, then Player~$i$ receives a \emph{payoff} of $1$; otherwise, for traces that do not visit $R_i$, they receive a payoff of $0$. Since the game transitions probabilistically and players themselves may select their own actions probabilistically, we reason about the \emph{expected payoff} assigned to a player under some behavioral scheme. These behaviors are defined by \emph{strategies}.

\begin{definition}[Strategy for Player~$i$]\label{stratdef}
A strategy for Player~$i$ is a function $\pi_i :Q^* \rightarrow {\sf Dist}(A)$ that takes as input the observed input of the game (a history of observed states as an element of $Q^*$) and outputs a distribution over the set of action names. The set of all Player~$i$ strategies $\pi_i$ is denoted $\Pi_i$.
\end{definition}

Note that strategies consider histories in a game to be a sequence of states, not a sequence of state-action pairs. Although this distinction is meaningful in many settings, it is not in ours -- a point we justify after introducing our notions of equilibria. At a high level, this is because the focus of our paper is a setting in which player strategies are \emph{memoryless}. Memoryless strategies are also referred to as `stationary' in the literature, c.f.~\cite{etrstatne,stationaryepsinturnbased}.

\begin{definition}[Memoryless Strategy for Player~$i$]
    A memoryless strategy for Player~$i$ is a function $\pi_i : Q \rightarrow {\sf Dist}(A)$. 
\end{definition}

Thus, instead of considering the sequence of previously visited states, a memoryless strategy only takes into account the current state of the game.  Memoryless strategies are highly desirable from an algorithmic perspective due to the simplicity of their representation~\cite{etessamihandbook}.

A \emph{strategy profile} $\pi = \langle \pi_1 \ldots \pi_{m} \rangle$ is an $m$-tuple of strategies, with one strategy assigned to each player. In this paper, we are interested in strategy profiles $\pi$ such that each $\pi_i \in \pi$ is memoryless. The set of all strategy profiles is then similarly denoted $\Pi$. Given a game $\mathbb{G}$ and a strategy profile $\pi$ consisting of finite-memory strategies, each Player~$i$ is assigned an expected payoff between $0$ and $1$ that represents the probability that the set $R_i \subseteq Q$ is visited in $\mathbb{G}$ when all players follow the profile $\pi$ exactly. The expected payoff given to Player~$i$ in $\mathbb{G}$ when all players follow $\pi$ is denoted $\rho_i(\mathbb{G},\pi)$. We are interested in strategy profiles that satisfy the \emph{Nash equilibrium} solution concept.

\begin{definition}[Nash Equilibrium]
A strategy profile $\pi = \langle \pi_1 \ldots \pi_{m}\rangle$ is a Nash equilibrium if it is the case that for all players $i \in \Pi$ we have $\forall \pi^* \in \Pi_i. \rho_i(\mathbb{G},\pi) \geq \rho_i(\mathbb{G}, \langle \pi_1 \ldots \pi^* \ldots \pi_{m}\rangle )$.
\end{definition}

As before, we are interested in memoryless strategy profiles. The main focus of this paper is to find strategy profiles that are memoryless and that also satisfy the Nash equilibrium condition. Thus, we are brought to an important distinction. Although the strategy profile $\pi$ assigns a memoryless strategy $\pi_i$ to Player~$i$, Player~$i$ is not obligated to use a memoryless strategy when deviating -- he may use a general strategy of the form $\pi_i : Q^* \rightarrow {\sf Dist}(A)$. 

In addition to the Nash equilibrium, we also consider a widely-used relaxation called the $\varepsilon$-equilibrium.

\begin{definition}[$\varepsilon$-Nash Equilibrium]
A strategy profile $\pi = \langle \pi_1 \ldots \pi_{m}\rangle$ is a $\varepsilon$-Nash equilibrium for some $\varepsilon \in \mathbb{Q}$ if it is the case that for all players $i \in \Pi$ we have $\forall \pi^* \in \Pi_i. \rho_i(\mathbb{G},\pi) + \varepsilon \geq \rho_i(\mathbb{G}, \langle \pi_1 \ldots \pi^* \ldots \pi_{m}\rangle )$.
\end{definition}

Thus, the $\varepsilon$-Nash equilibrium does not recognize deviations that result in an arbitrarily small improvement, only ones that improve the expected payoff by at least some fixed constant $\varepsilon$. Sometimes, the $\varepsilon$-Nash equilibrium condition is also stated as $\forall \pi^* \in \Pi_i. (1 + \varepsilon)\rho_i(\mathbb{G},\pi) \geq \rho_i(\mathbb{G}, \langle \pi_1 \ldots \pi^* \ldots \pi_{m}\rangle )$, treating $\varepsilon$ as a multiplicative offset as opposed to an additive one. The results in this paper apply to both definitions, a point we justify in the technical sections.

While considering memoryless strategies is common in the literature~\cite{etessamihandbook}, we introduce a new restriction on strategies meant to handle issues of numerical representation. 
\begin{definition}[$\ell$-bit-representable Memoryless Strategy]
A memoryless strategy $\pi_i$ is $\ell$-bit-representable if all rational numbers used in its representation can be represented using at most $\ell$ bits.
\end{definition}

\begin{remark}
It is important to note that a deviating player is under no obligation to play a strategy that is either memoryless or $\ell$-bit-representable. Thus, in this paper, we are looking for equilibria strategy profiles $\pi$ in which each individual strategy profile $\pi_i \in \pi$ satisfies both the $\ell$-representable and memoryless conditions that cannot be profitably deviated from by \emph{arbitrary} strategies. 
\end{remark}

In this paper, we use three different numerical systems. In each setting, the phrase ``$\ell$-bit-representable'' applies only to rational numbers. First, we consider elements of  $\mathbb{Q}$ in Section~\ref{sec:Q}. In $\mathbb{Q}$, $\ell$-bit-representable means that all probabilities output by $\pi_i$ are rational numbers that can be expressed using at most $\ell$ bits. 
Then, we consider $\mathbb{Q}(r_1 \ldots r_m)$, the $d$-degree field extension obtained by extending $\mathbb{Q}$ by the radical expressions $r_1 \ldots r_m$ in Section~\ref{sec:Qext}. Here, $\ell$-bit-representable means that when we express a number in $\mathbb{Q}(r_1 \ldots r_m)$ as a linear combination of the basis elements $\{1 , b_1 \ldots b_{d-1}\}$, $a_0 + a_1b_1 + \ldots + a_{d-1}b_{d-1}$, each $a_i$ is an $\ell$-bit-representable rational number.  Finally, we remove the $\ell$-bit-representable restriction and consider general real numbers in Section~\ref{sec:noell}.

\begin{remark}\label{transitionistrepresentable}
    For $\mathbb{Q}$ and $\mathbb{Q}(r_1 \ldots r_m)$, we say that the probabilities used in the representation of $\delta$ are $t$-representable, where $t$ is the size of the largest rational number used to represent a probability in $\delta$. For $\mathbb{R}$, no such parameter $t$ is derived; but probabilities in the transition function $\delta$ are still represented as rational numbers, matching previous formulations in literature such as~\cite{stationaryepsinturnbased}   
\end{remark}

The last ingredient to introduce is the notion of \emph{payoff constraints}. For each Player~$i$, a payoff constraint $c_i = \{l_i ,u_i\}$ is an interval of values with $0 \leq l_i \leq u_i \leq 1$. Payoff constraints are widely used in the literature (see~\cite{BBMU15,CONITZER2008621} for two well-known examples), as they provide a way for a system designer to specify a range of values that they would like a player's payoff to fall in. This provides a method for the designer to filter out equilibria with properties they deem undesirable, such as those that yield minimal payoffs to each player. 

\begin{definition}[Payoff Constraints]
    Given a game $\mathbb{G}$, a constraint set $C = [c_1 \ldots c_m]$ is a set of payoff constraints $c_i = \{l_i, u_i\}$, with the constraint $c_i$ assigned to Player~$i$. A strategy profile $\pi$ satisfies a constraint set $C$ iff $\forall i \in \Pi. l_i \leq \rho_i(\mathbb{G},\pi) \leq u_i$. 
\end{definition}

Since we work with different numerical systems in this paper, we assume that the constraints are $\ell$-bit-representable w.r.t the numerical system as well. For all three settings, the main objects of our study are the realizability problems.

\begin{definition}[$\ell$-bit-representable Nash Equilibrium Realizability Problem]
    Given a game $\mathbb{G}$, a number $\ell$ represented in unary, and a constraint set $C$, decide if there exists a Nash equilibrium strategy profile $\pi = \langle \pi_1 \ldots \pi_m \rangle $ such that each $\pi_i \in \pi$ is memoryless, $\ell$-bit-representable, and satisfies $C$.  
\end{definition}

\begin{definition}[$\varepsilon$-Nash Equilibrium Realizability Problem]
        Given a game $\mathbb{G}$, a number $\ell$ represented in unary,a number $\varepsilon$ represented in binary, and a constraint set $C$, decide if there exists a $\varepsilon$-equilibrium strategy profile $\pi = \langle \pi_1 \ldots \pi_m \rangle $ such that each $\pi_i \in \pi$ is memoryless, $\ell$-bit-representable, and satisfies $C$.   
\end{definition}

Going forward, we drop the ``$\varepsilon$-Nash'' naming convention, referring to $\varepsilon$-Nash equilibria as simply $\varepsilon$-equilibria. This makes the name shorter and more distinct from ``Nash equilibria''.

\section{Prior Work}\label{sec:stateoftheart}

While the focus on $\ell$-bit-representability in this paper is novel, the Nash-equilibrium realizability and $\varepsilon$-equilibrium realizability problems are significant areas of research in both the game theory and the theoretical computer science literature.  In this section, we provide a brief but thorough survey of known results, focusing on those applicable to our setting, i.e., the one in which each player has a reachability goal. This also serves as a literature review for a somewhat disorganized body of work. 

Before continuing with a survey of the results, it should be noted that \emph{normal-form} games, also called \emph{strategic-form} games~\cite{Osborne1994}, in which each reward is non-negative, can be easily encoded into our reachability goal-setting. This can be done by normalizing the rewards (for example, by dividing by the largest reward) to fall between 0 and 1 and then using simple probabilistic gadgets to encode payoff for players. For example, giving a player a payoff of $\frac{1}{3}$ can be encoded by a state that transitions to the goal of the player with probability $\frac{1}{3}$ and to a non-goal state with probability $\frac{2}{3}$. These two states can then deterministically transition to the same successor state, where a new payoff mechanism for another player can be implemented.  

Nash equilibria may sometimes need distributions that use irrational probabilities~\cite{Nash51}, even in the normal-form non-negative reward setting~\cite{BiloMavronicolas2014}. This means that the restriction to $\ell$-bit-representability is a meaningful one, as the use of rational probabilities may disqualify the certain Nash equilibria. More recent work~\cite{irradicalprobsNash} has shown Nash equilibria in normal-form non-negative reward games may even require \emph{irradical} probabilities, meaning numbers that are algebraic but do not admit a representation using radicals. It should also be noted that in~\cite{BiloMavronicolas2014}, the authors studied decision problems related to the numerical representation of probabilities in Nash equilibria in normal-form games, such as determining whether a game admits a Nash equilibrium that utilizes only rational probabilities and determining whether a game admits a Nash equilibrium that utilizes at least one irrational probability.

For \emph{two}-player games in which both players have a reachability goal (mirroring our setting), it has been proven that an $\varepsilon$-equilibrium exists for every $\varepsilon >0$~\cite{vieille2000a,vieille2000b} (in fact, the result is established for more general games with limit-average payoff goals). In~\cite{chatterjee-majumdar-jurdzinski-csl2004}, it is claimed that there is a two-player game in which both players have a reachability goal but no Nash equilibrium exists. This claim is unfortunately flawed, as the example game considered there \emph{does} admit a pure strategy Nash equilibrium, as is pointed out in~\cite{DBLP:conf/fsttcs/BouyerMS14}. Therefore, it is unknown whether Nash equilibria always exist in two-player games in which both players have a reachability goal. It is further claimed in~\cite{chatterjee-majumdar-jurdzinski-csl2004} that every game with three or more players admits an $\varepsilon$-equilibrium for every $\varepsilon > 0$. This claim is also unfortunately flawed, as demonstrated in~\cite{EteYan08}. Therefore, the existence of $\varepsilon$-equilibria in three-or-more player games is also an open question. 

In~\cite{nostationaryNEkuipers}, the authors demonstrate a 3-player \emph{turn-based} game that admits no memoryless Nash equilibrium. This example is encoded into the reachability goal setting in~\cite{ummelsthesis} (Proposition 3.13), demonstrating that considering only memoryless strategies is also a meaningful restriction. This has motivated various other papers to examine the bounded-memory setting, see~\cite{boundedpartialinformationconcurrency} for one such recent example. The recent works in~\cite{recurisveetr,stationaryepsinturnbased} should also be noted, as they study turn-based models that are otherwise similar to the models considered in this paper. Of particular relevance is their focus on realizing memoryless Nash and $\varepsilon$-equilibria, though it should be noted that~\cite{stationaryepsinturnbased} deals with a \emph{promise} version of the realizability problem. One of the most relevant papers to this work is~\cite{etrstatne}, which makes the claim that determining whether payoff-constrained memoryless Nash equilibria exist in concurrent mean-payoff games belongs to $\exists\mathbb{R}$ (the existential theory of the reals). We re-examine this question in Section~\ref{sec:noell}, where we remove the $\ell$-bit-representable restriction, as although the $\exists\mathbb{R}$ upper bound is correct as a result, the proof is incorrect as written and must be repaired. We discuss this point and review other related results (namely,~\cite{stationaryepsinturnbased,etrstatne,recurisveetr}) in more detail in Section~\ref{sec:noell}.

In summary, not everything is known about $2$-player games in which each player has a reachability goal, and relatively little is known about $n$-player games ($n \geq 3$). From the results that are known, we can deduce that both the memoryless restriction and $\ell$-bit-representability represent meaningful restrictions on the Nash equilibrium and $\varepsilon$-equilibrium realizability problems.

\section{General Constructions and Techniques}\label{sec:constructions}

The techniques presented to establish the upper bound in this Section form the algorithmic core of the NP upper bounds for the $\ell$-bit-realizability problems for both the Nash and $\varepsilon$-equilibria in $\mathbb{Q}$ and $\mathbb{Q}(r_1 \ldots r_m)$ in Section~\ref{sec:Q} and Section~\ref{sec:Qext}. The Markov Chain and Markov Decision Process analysis of Nash and $\varepsilon$-equilibria has been utilized in the literature before, for example~\cite{etrstatne}, but our algorithmic framework uniquely focuses on numerical analysis, such as controlling the size of the numbers considered using our $\ell$-bit bounds.

\begin{enumerate}
    \item A memoryless $\ell$-bit-representable strategy profile $\pi$ is non-deterministically guessed. This uses a polynomial amount of non-determinism, as the strategy profile $\pi$ is memoryless and $\ell$-bit-representable (where $\ell$ is provided in unary in the input of the realizability problems).
    \item In order to verify that $\pi$ satisfies either the Nash equilibrium or $\varepsilon$-equilibrium condition, two probabilities must be compared for each player. The first probability, which we call the \emph{compliance} probability, represents the expected payoff that a Player~$i$ receives when all players follow $\pi$ without deviating. The second probability, which we call the \emph{deviation} probability, represents the maximum expected payoff that Player~$i$ can achieve through a unilateral deviation, i.e., Player~$i$ changes their strategy but all other players follow $\pi$ without deviation.
    \item In order to compute the compliance probability $\rho_i(\mathbb{G},\pi)$, we construct the Markov Chain $\mathbb{G} \times \pi$ -- the cross-product of the game $\mathbb{G}$ and the memoryless strategy profile $\pi$ --  that computes the expected payoff of the players when $\pi$ is followed without deviation.
    \item In order to compute the deviation probability for a specific Player~$i$, we construct the Markov Decision Process $\mathbb{G}_i \times \pi$, which allows Player~$i$ to change their strategy and deviate while all other players follow $\pi$ without deviation. Solving this MDP then yields the deviation probability for Player~$i$.
    \item The Nash equilibrium condition can then be verified by directly comparing the compliance and deviation probabilities for each player (if the deviation probability is strictly larger, then $\pi$ is not a Nash equilibrium). In order to do this comparison, numerical bounds must be established that bound the number of bits that must be compared. The $\varepsilon$-equilibrium condition can then be verified by comparing the sum of the compliance probability and $\varepsilon$ with the deviation probability for each player. These same numerical bound techniques can be applied to the constraint set as well, as we can then check that $\forall i. l_i \leq \rho_i(\mathbb{G},\pi) \leq u_i$. 
\end{enumerate}

We now detail the two main constructions, the Markov Chain $\mathbb{G} \times \pi$ and the Markov Decision Processes $\mathbb{G}_i \times \pi$.

\subsection{The Markov Chain $\mathbb{G} \times \pi$}\label{sec:MCsection}

As mentioned before, the Markov Chain $\mathbb{G} \times \pi$ is used to compute the \emph{compliance probability} $\rho_i(\mathbb{G},\pi)$ for each player, which is the expected payoff they receive when all players follow $\pi$ without deviation. The Markov Chain $\mathbb{G} \times \pi$ is given by the triple $\langle Q, q_0 , E , (R_i)_{i \in \Pi} \rangle$

The state space of $\mathbb{G} \times \pi$ is $Q$, the set of states in $\mathbb{G}$, with the initial state $q_0$ preserved. For the sake of convenience, we carry over the set of reachability goals $(R_i)_{i \in \Pi}$ from $\mathbb{G}$. The most significant part of this construction is the matrix $E$, where $E(q_1, q_2)$ represents the probability that $q_1$ transitions to $q_2$ in $\mathbb{G} \times \pi$. This probability is given by:

$$E(q_1 , q_2) = \sum_{r \in  R_{\delta}(q_1)} \mathbb{P}(\pi(q_1) \sim r) \cdot \mathbb{P}(\delta(q_1, r) \sim q_2)$$

where the notation $\mathbb{P}(d \sim c)$ is the probability that the distribution $d$ assigns to the element (or group of elements) $c$. Intuitively, we iterate over $R_{\delta}(q_1)$, the set of rows in the compressible table $\delta$ associated with $q_1$. For each row, we determine the probability that $\pi$ outputs a joint action that matches the form of $r$ (which we call a \emph{pattern}), denoted by $\pi(q_1) \sim r$. This is relatively easy to compute given the compressible table format, as the probability that $\pi$ outputs a tuple of the form $a{\ast} {\ast} {\ast} b {\ast}$ in a 5-player game, for example,  is simply the probability that the first player outputs $a$ (which is determined by the output of $\pi_1$ at $q_1$) multiplied by the probability that the fourth player outputs $b$ (which is determined by the output of  $\pi_4$ at $q_1$). This probability is then multiplied by the probability that $\delta$ transitions from $q_1$ to $q_2$ upon reading such a joint action, which involves reading the distribution assigned by $\delta$. %This only involves a lookup in $\delta$, as we are quantifying over the rows of $\delta$ to begin with.  
The final row of $R_\delta(q_1)$, which corresponds to all tuples that do not match a listed pattern, is then computed as the difference of $1$ and the sum of the other computed probabilities.

It is critical to note that although the computational complexity of computing $E(q_1,q_2)$ is critical to our analysis, we do not explore it in this section, which is solely focused on establishing general constructions. We leave the complexity-theoretic analysis to the latter sections,  as this allows us to fix the numerical system we are working in. 

The payoff $\rho_i(\mathbb{G},\pi)$ is then the probability that the set $R_i$ is eventually seen in $\mathbb{G} \times \pi$.  For a Player~$i$, this is given by the hitting probability of the set $R_i$ from the initial state $q_0$. It now remains to consider the \emph{deviation probability}, which is the maximum payoff a player can receive when they deviate but all other players still follow $\pi$.

\subsection{The Markov Decision Process $\mathbb{G}_i \times \pi$}\label{sec:MDPsection}

To compute the deviation probability for a Player~$i$, we construct the Markov Decision Process (MDP) $\mathbb{G}_i \times \pi =  \langle Q, q_0, h, A, (R_i)_{i \in \Pi} \rangle $. Conceptually, the only difference between this MDP and $\mathbb{G} \times \pi$ is that in the MDP, we allow a single Player~$i$ to choose their action independently from the recommendation of $\pi$. This allows them to choose the strategy that achieves the maximum expected payoff when all other players commit to following $\pi$. This maximum expected payoff is precisely the deviation probability.

Thus, we specify that the sole player making decisions in the MDP -- the stand-in for Player~$i$ in $\mathbb{G}$ -- has an action set of $A$. The new transition function $h$ now takes as input a state $ q_1$ \emph{and} an action $\overline{a} \in A$ and returns a distribution over possible successor states in $Q$. Therefore, we wish to compute terms of the form $h(q_1, \overline{a}, q_2 )$, the probability that $q_1$ transitions to $q_2$ in $\mathbb{G}_i \times \pi$ when the action $\overline{a} \in A$ is chosen.  

This probability $h(q_1, \overline{a}, q_2 )$ is then given by 

$$h(q_1, \overline{a}, q_2 ) =  \sum_{\substack{r \in R_{\delta}(q_1)\\ r[i] = \overline{a} \vee {\ast}}} \mathbb{P}(\pi(q_1) \sim r[-i]) \cdot \mathbb{P} (\delta( q_1, r) \sim q_2)$$

where $r[-i]$ represents the joint action pattern $r$ with the $i$-th element projected out. Although the formulation may look complex, the idea is relatively simple. Since Player~$i$ has chosen $\overline{a}$ regardless of the output of $\pi$ and this choice is independent of the other players' action choices, we iterate over $R_{\delta}(q_1)$ to find row patterns where the $i$-th element is $\overline{a}$ or ${\ast}$ (e.g. the index condition $r[i] = \overline{a} \vee {\ast}$ in the sum). We then calculate the probability that the non-$i$ elements of $r$ (e.g. $r[-i]$) were output by $\pi$. Since $\pi$ is a tuple of memoryless strategies, this can be done component-wise as in Section~\ref{sec:MCsection}, i.e. $$\prod_{\substack{j \in \Pi \\ j \not = i}} \mathbb{P}(\pi_j(q_1) \sim r[j])$$

the product of the probabilities that the action output by $\pi_j$ matches the $j$-th character of the pattern, skipping over the $i$-th index since we know that is fixed to be $\overline{a}$. Note that this is exactly what we do in the Markov Chain case, but without skipping over the $i$-th index.

With these two constructions in hand, we move on to tackling the different cases created by our choices of numerical system and $\ell$, our bit-size restriction on rational numbers. It is important to recall that when we talk about $\ell$-representability, we say that the probabilities output by $\delta$ are $t$-representable (see Remark~\ref{transitionistrepresentable}). This is due to the fact that the transition table $\delta$ is provided as part of the input, and we derive the parameter $t$ from the representation of $\delta$ in the input.  

\section{$\ell$-bit-representable in $\mathbb{Q}$}\label{sec:Q}

In this section, we consider memoryless strategies that output rational probabilities that can be represented using at most $\ell$ bits. As described in Section~\ref{sec:constructions}, our analysis begins by nondeterministically guessing a strategy profile $\langle \pi_1 \ldots \pi_m \rangle$ by guessing its constituent strategies $\pi_i$, each of which can be represented in polynomial space with respect to the input of $\mathbb{G},k$, and $\ell$. 

The strategy profile $\pi$ outputs joint actions component-wise by considering the output of each $\pi_i$. Thus, it is important to reason about the size of the probabilities that $\pi$ assigns to an element of $R_q(\delta)$. Since each $\pi_i$ is $\ell$-bit-representable, this product, when represented as a rational number, can be represented using at most $2m\ell = O(m\ell)$ bits. This can be seen by noting that the product of $m$ $\ell$-sized integers can be represented using at most $m\ell$ bits, and applying that bound to both the numerators and denominators of the probabilities in $\tau_i$. This bound is polynomial in the size of the input since both $m$ and $\ell$ are provided in unary. Therefore, the probabilities output by $\pi$ have polynomial size w.r.t the input.

The first step is constructing the Markov Chain $\mathbb{G} \times \pi$. In the Markov Chain $\mathbb{G} \times \pi$, the edge relation $E(q_1,q_2)$ was computed through the expression $ \sum_{r \in  R_{\delta}(q_1)} \mathbb{P}(\pi(q_1) \sim r) \cdot \mathbb{P}(\delta(q_1, r) \sim q_2)$. This is a sum with at most $|R_{\delta}(q_1)|$ summands. Each summand is a product of an output probability of $\pi$, which, as described previously, can be represented using at most $2m\ell$ bits (in the worst-case scenario where there are no ${\ast}$ characters), and a probability output by the transition function $\delta$, which, as described in Remark~\ref{transitionistrepresentable}, can be represented using at most $t$ bits.  A single product of the form $\mathbb{P}(\pi(q_1) \sim r) \cdot \mathbb{P}(\delta(q_1, r) \sim q_2)$ can therefore be represented using $O(m\ell t)$ bits.

Since there are at most $|R_{\delta}(q_1)|$ summands,  $E(q_1,q_2)$ can be represented using at most $O(\log(|R_{\delta}(q_1)|m\ell t))$ bits, which is polynomial in the size of the input.    Furthermore, since the number of rows associated with $q_1$ in $\delta$ is represented in unary in the input (as rows in $\delta$), $E(q_1,q_2)$ can be computed in polynomial time w.r.t the size of $\mathbb{G}$, $k$, and $\ell$.

\begin{lemma}\label{lem:k=1-Q-MCpolynomialrep}
    In the $\ell$-bit-representable over $\mathbb{Q}$ setting, the Markov Chain $\mathbb{G}\times \pi$ can be represented in polynomial space w.r.t the size of the input of $\mathbb{G}$ and $\ell$.
\end{lemma}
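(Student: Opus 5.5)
The plan is to bound the total bit-size of the Markov Chain $\mathbb{G}\times\pi = \langle Q, q_0, E, (R_i)_{i\in\Pi}\rangle$. The components $Q$, $q_0$ and $(R_i)_{i\in\Pi}$ are copied directly from $\mathbb{G}$, so their size is linear in $|\mathbb{G}|$. The real work is the matrix $E$. It has $|Q|^2$ entries, which is polynomial in $|\mathbb{G}|$. It therefore suffices to show that each entry $E(q_1,q_2)$ is a rational number whose numerator and denominator have bit-length polynomial in $|\mathbb{G}|$ and $\ell$.

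I would bound each entry bottom-up.

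\textbf{Pattern probabilities.} Each factor $\mathbb{P}(\pi_j(q_1)\sim r[j])$ is either $1$ (when $r[j]={\ast}$) or a rational number with at most $\ell$ bits. A product of at most $m$ such factors has numerator and denominator each bounded by a product of $m$ integers of at most $\ell$ bits. Hence it has at most $m\ell$ bits each, so $\mathbb{P}(\pi(q_1)\sim r)$ uses $O(m\ell)$ bits.

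\textbf{Summands.} Multiplying by $\mathbb{P}(\delta(q_1,r)\sim q_2)$, which is $t$-representable by Remark~\ref{transitionistrepresentable}, gives a summand of $O(m\ell+t)$ bits.

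\textbf{Sums.} The delicate point is summing over the $k=|R_\delta(q_1)|$ rows. I would not add fractions naively, since that risks squaring sizes. Instead I would use a common denominator equal to the product of all $k$ denominators. This gives a denominator of at most $k(m\ell+t)$ bits and a numerator of at most $k(m\ell+t)+\log k$ bits. Since $k\le|\mathbb{G}|$ (rows are listed explicitly), $m\le|\mathbb{G}|$, $t\le|\mathbb{G}|$ and $\ell$ is given in unary, this is polynomial. Reducing the fraction by a gcd can only shrink it.

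\textbf{The blank row.} This is the step I expect to be the main obstacle. Its pattern probability is defined as $1$ minus the sum of the other pattern probabilities, which is again a sum of $k$ fractions of $O(m\ell)$ bits. The same common-denominator bound gives $O(km\ell)$ bits. Multiplying by its $t$-bit transition probability and including it as one more summand keeps the total size of $E(q_1,q_2)$ at $O(k(m\ell+t))$ bits.

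Summing over the $|Q|^2$ entries gives a total representation size of $O(|Q|^2\,k\,(m\ell+t))$. This is polynomial in the size of $\mathbb{G}$ and $\ell$, which proves the lemma.
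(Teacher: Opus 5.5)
Your proposal is correct and takes essentially the same bottom-up route as the paper. Both bound each pattern probability $\mathbb{P}(\pi(q_1)\sim r)$ by $O(m\ell)$ bits, combine it with the $t$-bit transition probability to bound each summand, and then bound the sum over the rows of $R_\delta(q_1)$. Your version is more careful than the paper's, which states $O(m\ell t)$ bits per summand and an unjustified (and as written, too small) total of $O(\log(|R_\delta(q_1)|m\ell t))$ bits; your common-denominator bound of $O(k(m\ell+t))$ bits and your explicit accounting for the blank row supply the justification the paper omits.
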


By an entirely similar argument, the Markov Decision Process $\mathbb{G}_i \times \pi$ can be constructed in polynomial time from the inputs of $\mathbb{G},k$, and $\ell$. The only difference is that instead of an edge relation $E$, we have a transition function $h$. Although the choice of action from Player~$i$ influences the computation of the transition function, an element that was not present in $E$, the computations and reasoning patterns are largely the same -- each value $h(q_1 ,\overline{a}, q_2)$ is a polynomial-size probability --  a sum over the set $R_{\delta}(q_1)$ of rows of a product of $O(k)$ $\ell$-bit-representable probabilities.

\begin{lemma}\label{lem:k=1-Q-MDPpolynomialrep}
    In the $\ell$-bit-representable over $\mathbb{Q}$ setting, the Markov Decision Process $\mathbb{G}_i \times \pi$ can be represented in polynomial space w.r.t the size of the input of $\mathbb{G}$ and $\ell$.
\end{lemma}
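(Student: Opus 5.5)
The plan is to reuse the argument behind Lemma~\ref{lem:k=1-Q-MCpolynomialrep}, adjusting it for the extra action argument of $h$. The MDP $\mathbb{G}_i \times \pi$ has state space $Q$ and action set $A$, both part of the input. It therefore suffices to show two things. First, there are polynomially many entries $h(q_1,\overline{a},q_2)$, namely $|Q|^2\cdot|A|$, which is polynomial in $|\mathbb{G}|$. Second, each entry is a rational number whose numerator and denominator have polynomially many bits in $|\mathbb{G}|$ and $\ell$.

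For the bit bound, fix $q_1,\overline{a},q_2$ and expand
\[
h(q_1,\overline{a},q_2)=\sum_{\substack{r \in R_{\delta}(q_1)\\ r[i] = \overline{a} \vee {\ast}}} \Bigl(\prod_{j\neq i} \mathbb{P}(\pi_j(q_1)\sim r[j])\Bigr)\cdot \mathbb{P}(\delta(q_1,r)\sim q_2).
\]
Each factor $\mathbb{P}(\pi_j(q_1)\sim r[j])$ is either $1$ (when $r[j]={\ast}$) or an $\ell$-bit rational output by $\pi_j$. The product over at most $m-1$ players therefore has numerator and denominator of at most $(m-1)\ell$ bits each, as in the Markov Chain case. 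Multiplying by the $t$-bit transition probability (Remark~\ref{transitionistrepresentable}) gives each summand at most $O(m\ell+t)$ bits in numerator and denominator.

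Summing at most $|R_\delta(q_1)|$ such rationals over a common denominator multiplies denominators, so the result needs at most $O(|R_\delta(q_1)|(m\ell+t))$ bits. This is polynomial because the rows of $\delta$ are listed explicitly in the input and $m$, $\ell$ are in unary. I would state this crude bound rather than the tighter logarithmic one.

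The blank final row needs separate treatment. Its pattern probability is $1$ minus the sum of the other patterns' probabilities, restricted to patterns compatible with $\overline{a}$ in coordinate $i$. This is again a polynomial-size sum of polynomial-size rationals, so the same bound applies.

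The main obstacle is making this blank-row complement rigorous. Once player $i$'s coordinate is fixed to $\overline{a}$, the rows with $r[i]=\overline{a}$ or ${\ast}$ must partition the matched joint actions of the other players. This follows from the non-overlap requirement in the specification of $R_\delta(q)$. With it, the complement is exactly the mass of unmatched tuples, and no double counting occurs.

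Finally, each entry is computed by polynomially many arithmetic operations on polynomial-size rationals, so the whole table for $h$ is produced in polynomial time and hence occupies polynomial space.
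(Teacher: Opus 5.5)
Your proof is correct and follows essentially the paper's route: the paper reuses the Markov-chain argument of Lemma~\ref{lem:k=1-Q-MCpolynomialrep}, noting that each $h(q_1,\overline{a},q_2)$ is a sum over the rows of $R_\delta(q_1)$ of products of $\ell$-bit strategy probabilities with a $t$-bit transition probability. Your additive common-denominator bound $O(|R_\delta(q_1)|(m\ell+t))$ is actually required, not just cruder: the paper's logarithmic figure cannot hold as written, since a single summand can already need about $m\ell$ bits. Your explicit treatment of the blank row through the non-overlap requirement also fills in a step the paper leaves implicit for the MDP.
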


With both $\mathbb{G} \times \pi$ and $\mathbb{G}_i \times \pi$ constructed, we can determine whether Player~$i$ has a profitable deviation from $\pi$ in $\mathbb{G}$ by comparing the optimal hitting probability of $R_i$ from $q_0$ that Player~$i$ can obtain in $\mathbb{G}_i \times \pi$ to the hitting probability of $R_i$ from $q_0$ obtained in $\mathbb{G} \times \pi$. We start with $\mathbb{G}_i \times \pi$. Since $\mathbb{G}_i \times \pi$ is a Markov Decision Process with a reachability goal, one of the strategies that realizes the optimal hitting probability in $\mathbb{G} \times \pi$ must be a \emph{policy}, a strategy that is both deterministic and memoryless~\cite{Puterman94}. This policy can be obtained through linear programming methods, which, in general, run in weakly polynomial time~\cite{Puterman94}. Here, however, since we are given the parameter $\ell$ explicitly in unary and all probabilities are of size $\mathrm{poly}(\ell)$, we obtain a polynomial-time algorithm. Although linear programming methods may not directly return a policy, it is a relatively straightforward matter to extract an optimal policy from a linear programming solution~\cite{Puterman94}. The policy itself, being memoryless and deterministic, can be represented in polynomial space. Once this optimal policy is obtained, fixing the behavior of Player~$i$ to this policy in $\mathbb{G}_i \times \pi$ creates a Markov Chain $\hat{\mathbb{G}} \times \pi$ with transition matrix $E^*$ that can be represented in polynomial space w.r.t the size of $\mathbb{G},k$ and $\ell$ -- this is the same argument as Lemma~\ref{lem:k=1-Q-MCpolynomialrep}.  The hitting probability can then be solved by inverting the new transition matrix $E^*$ (which may require polynomial-time pre-processing, see e.g.~\cite{Puterman94,RRV26a}), and since $E^*$ has a polynomial-size representation, the final hitting probability has a polynomial-size representation (applying Cor 3.2a of~\cite{schrijver1998} to show a polynomial-size representation of the inverse). The exact same logic applies to computation in $\mathbb{G} \times \pi$, simply skipping the step where a policy was computed.

\begin{lemma}\label{lem:k=1-Q-hitting-prob-polynomial-representation}
        In the $\ell$-bit-representable over $\mathbb{Q}$ setting, both the compliance probability and the deviation probability can be computed in polynomial time (and therefore, have polynomial space representations).
\end{lemma}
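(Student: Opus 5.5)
The plan is to handle the compliance probability first and the deviation probability second, since the latter reduces to the former once an optimal policy is fixed.

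\textbf{Compliance probability.} By Lemma~\ref{lem:k=1-Q-MCpolynomialrep}, the Markov Chain $\mathbb{G}\times\pi$ has polynomially many entries $E(q_1,q_2)$. Each entry is a rational of polynomial bit-size, computable in polynomial time.
\begin{enumerate}
\item I will compute $\rho_i(\mathbb{G},\pi)$ as the hitting probability of $R_i$ from $q_0$.
\item As pre-processing, I make every state of $R_i$ absorbing. I then find the set $Z$ of states from which $R_i$ is unreachable in the underlying graph, using graph search on the support of $E$. This is purely combinatorial, so it is polynomial.
\item On the remaining states $S = Q\setminus(R_i\cup Z)$, the hitting-probability vector $x$ is the unique solution of $(I-E_S)x = b$. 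Here $E_S$ is $E$ restricted to $S$, and $b(q)=\sum_{q'\in R_i}E(q,q')$.
\item Uniqueness holds because, after removing $Z$, every state in $S$ reaches $R_i$ with positive probability. Hence $E_S$ is substochastic with spectral radius below $1$, and $I-E_S$ is invertible.
\end{enumerate}

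\textbf{Bit-size and running time.} To get a size bound, I clear denominators by multiplying each row by the product of its entries' denominators. This keeps the bit-size polynomial, since there are polynomially many entries of polynomial size. Cramer's rule, or Cor.~3.2a of~\cite{schrijver1998}, then shows that each entry of $x$ is a ratio of determinants of polynomial bit-size. Gaussian elimination computes $x$ in polynomial time, since it runs in strongly polynomial time over rationals with controlled intermediate sizes (Edmonds).

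\textbf{Deviation probability.} By Lemma~\ref{lem:k=1-Q-MDPpolynomialrep}, the MDP $\mathbb{G}_i\times\pi$ also has a polynomial-size description with rational transition probabilities. For reachability objectives in finite MDPs, the optimal value over arbitrary (history-dependent, randomized) strategies is attained by a memoryless deterministic policy~\cite{Puterman94}. Thus the deviation probability equals the value of a policy. This step matters because the deviating player may use an arbitrary strategy. The procedure is as follows.
\begin{enumerate}
\item Pre-process by computing the set of states with maximal value $0$, again by graph analysis, and fixing them to $0$.
\item Solve the standard linear program: minimize $\sum_q x_q$ subject to $x_q \ge \sum_{q'} h(q,\overline a,q')x_{q'}$ for all $q,\overline a$, with $x_q=1$ on $R_i$ and $x_q=0$ on the value-$0$ states. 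The LP has polynomially many constraints with polynomial-size rational coefficients, so the ellipsoid or interior-point method solves it exactly in polynomial time.
\item Extract an optimal policy by choosing, at each state, an action whose constraint is tight and which makes progress toward $R_i$. Progress is ensured via the usual attractor-style selection on tight actions, to avoid policies that loop forever among value-preserving actions.
\item Fix this policy to obtain a Markov Chain with matrix $E^*$. Its entries are a subset of the $h$ values, so its size is polynomial, and the compliance argument above applies verbatim.
\end{enumerate}
Alternatively, the LP's optimal vertex already has polynomial bit-size by Schrijver's vertex-size bounds, which gives the representation bound directly.

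\textbf{Main obstacle.} I expect the delicate step to be the rigorous handling of uniqueness and optimal-policy extraction in the presence of end components, where tight actions may fail to reach $R_i$. I will handle this with the qualitative pre-processing described above before invoking linear algebra. The remaining steps are standard applications of polynomial-size bounds for solutions of rational linear systems.
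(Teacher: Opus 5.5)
Your proposal is correct and follows the paper's route. The paper solves the MDP $\mathbb{G}_i\times\pi$ by linear programming, extracts a memoryless deterministic optimal policy, fixes it to obtain a Markov chain, and bounds the hitting probabilities through a polynomial-size linear system (Cor.~3.2a of~\cite{schrijver1998}); the compliance probability is handled the same way without the policy step. Your qualitative pre-processing, which removes states that cannot reach $R_i$ so that $I-E_S$ is invertible, and your attractor-style selection among tight actions spell out details the paper leaves to its cited references.
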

%\myv{Where did we discuss the Markov-Chain-analysis algorithm?} \sr{We didn't, but I added in that inverting the matrix may require processing and references to Puterman and our previous paper that does more or less this explicitly in this setting. Going through the complete pruning process will make a long paper even longer, and we do not need the careful details from before that show we can do this pre-processing in LOGSPACE -- all we are looking for is polynomial time. Computing an MC hitting probability in polynomial time is uncontroversial and well-known, even if people gloss over the details. }

The final step is to then compare the compliance probabilities to the deviation probabilities and the $\ell$-bit-representable constraints. As shown in Lemma~\ref{lem:k=1-Q-hitting-prob-polynomial-representation}, the compliance probability can be written as $\frac{n_c}{d_c}$ ($n$ for numerator, $d$ for denominator), where both $n_c$ and $d_c$ have polynomial size w.r.t. $\mathbb{G},k,$ and $\ell$. By the same logic, the deviation probability can be written as $\frac{n_d}{d_d}$. These two probabilities can then be compared by cross-multiplying: $d_d \cdot n_c$ versus $d_c \cdot n_d$. Both products have polynomial size and can therefore be directly compared in polynomial time. The reasoning for the constraint comparisons is entirely similar. This completes our NP algorithm; the strategy profile $\pi$ can be nondeterministically guessed and then deterministically verified in polynomial time.

\begin{theorem}\label{thm:NE-Q-memoryless-NP-upper}
        The $\ell$-bit-representable Nash Equilibrium Realizability problem over $\mathbb{Q}$ belongs to NP. 
\end{theorem}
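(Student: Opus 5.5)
The proof is a guess-and-verify algorithm that follows the five-step framework of Section~\ref{sec:constructions}. The certificate is a memoryless, $\ell$-bit-representable profile $\pi=\langle \pi_1\ldots\pi_m\rangle$. For each state $q$ and player $i$ it records a rational probability of at most $\ell$ bits for each action in $A$. Its total size is $O(|Q|\cdot m\cdot|A|\cdot\ell)$, which is polynomial because $\ell$ is given in unary. The verifier first checks that each $\pi_i(q)$ is a genuine distribution, i.e.\ its entries are nonnegative and sum to $1$. This is a polynomial-time rational computation.

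Next, the verifier builds the Markov Chain $\mathbb{G}\times\pi$, which has polynomial size by Lemma~\ref{lem:k=1-Q-MCpolynomialrep}, and computes each compliance probability $\rho_i(\mathbb{G},\pi)$. For each player $i$ it also builds $\mathbb{G}_i\times\pi$, which has polynomial size by Lemma~\ref{lem:k=1-Q-MDPpolynomialrep}, and computes the deviation probability. Lemma~\ref{lem:k=1-Q-hitting-prob-polynomial-representation} guarantees that both quantities are computable in polynomial time as rationals $n_c/d_c$ and $n_d/d_d$ of polynomial bit-size. The verifier then compares $n_c d_d$ with $n_d d_c$, rejecting if the deviation value is strictly larger. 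It also checks $l_i\le \rho_i(\mathbb{G},\pi)\le u_i$ by cross-multiplication against the $\ell$-bit constraint endpoints. It accepts iff every check passes. Since $m$ checks of polynomial cost are made, the verifier runs in polynomial time.

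The main obstacle is correctness rather than complexity. The Nash condition quantifies over \emph{arbitrary} deviations $\pi^*\colon Q^*\to{\sf Dist}(A)$, which may be randomized and history-dependent, and the MDP $\mathbb{G}_i\times\pi$ only observes states and never the other players' actions. I would argue this in two steps.

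\textbf{Deviations map to MDP strategies.} Fix the memoryless $\pi_{-i}$. The distribution of the next state from $q$ depends only on $q$ and Player~$i$'s chosen action $\overline a$, and equals $h(q,\overline a,\cdot)$. This holds because the other players randomize independently at $q$ with $\pi_j(q)$, and because the rows of $R_\delta(q)$ are disjoint and total. So any deviation $\pi^*$ induces on $Q^\omega$ exactly the same measure as the corresponding history-dependent randomized strategy in the MDP. Hence $\sup_{\pi^*}\rho_i$ equals the optimal reachability value of $\mathbb{G}_i\times\pi$.

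\textbf{The optimum is attained by a policy.} By~\cite{Puterman94}, this optimal value is attained by a deterministic memoryless policy, which is what Lemma~\ref{lem:k=1-Q-hitting-prob-polynomial-representation} computes.

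These two facts together show that the verifier accepts exactly the profiles that are memoryless, $\ell$-bit-representable Nash equilibria satisfying $C$. This places the problem in NP.
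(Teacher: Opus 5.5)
Your proposal is correct and follows the paper's own guess-and-verify argument. You guess the memoryless $\ell$-bit profile, build $\mathbb{G}\times\pi$ and each $\mathbb{G}_i\times\pi$, obtain the compliance and deviation probabilities in polynomial time via Lemma~\ref{lem:k=1-Q-hitting-prob-polynomial-representation}, and compare them (and the constraints) by cross-multiplication. Your added checks are details the paper leaves implicit and do not change the approach: that each $\pi_i(q)$ is a distribution, and that arbitrary history-dependent randomized deviations are captured by the MDP value, which a deterministic memoryless policy attains.
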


The exact same algorithm applies to $\varepsilon$-equilibria, in which the sum of the compliance probability and $\varepsilon$ can be compared to the deviation probability. Since $\varepsilon \in \mathbb{Q}$ is part of the input, it is a straightforward matter to add it to the compliance probability and get a polynomial-size bound on the sum. This straightforward modification also applies to the multiplicative $\varepsilon$ definition, in which $\rho_i(\mathbb{G},\pi)$  is multiplied by $(1 + \epsilon)$ instead of directly summed with $\varepsilon$. Note that $\varepsilon$ does not factor into the constraint sets in either definition.

\begin{theorem}\label{thm:eNE-Q-memoryless-NP-upper}
        The $\ell$-bit-representable $\varepsilon$-Equilibrium Realizability problem over $\mathbb{Q}$ belongs to NP. 
\end{theorem}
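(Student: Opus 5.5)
We follow the same guess-and-verify scheme used for Theorem~\ref{thm:NE-Q-memoryless-NP-upper}, modifying only the final comparison step. The NP algorithm nondeterministically guesses a memoryless, $\ell$-bit-representable profile $\pi$. Since $\ell$ and $m$ are given in unary, the guess has polynomial size. We then build the Markov Chain $\mathbb{G}\times\pi$ and, for each player $i$, the MDP $\mathbb{G}_i\times\pi$. By Lemmas~\ref{lem:k=1-Q-MCpolynomialrep} and~\ref{lem:k=1-Q-MDPpolynomialrep}, both have polynomial-size representations and can be built in polynomial time.

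Next we invoke Lemma~\ref{lem:k=1-Q-hitting-prob-polynomial-representation}. It says the compliance probability $\rho_i(\mathbb{G},\pi)=\frac{n_c}{d_c}$ and the deviation probability $\frac{n_d}{d_d}$ are computable in polynomial time, with polynomial-size numerators and denominators. The deviation probability is the optimal reachability value of $\mathbb{G}_i\times\pi$. Optimal reachability in an MDP is attained by a policy, and an MDP optimum over all strategies equals the supremum over arbitrary history-dependent deviations of Player~$i$ against the fixed memoryless $\pi_{-i}$. Hence it is exactly the quantity the $\varepsilon$-equilibrium condition quantifies over.

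For the check itself, write $\varepsilon=\frac{n_\varepsilon}{d_\varepsilon}$, given in binary. In the additive version, $\rho_i+\varepsilon=\frac{n_c d_\varepsilon+n_\varepsilon d_c}{d_c d_\varepsilon}$, which has bit-size bounded by the sum of the input bit-sizes and so stays polynomial. We compare it to $\frac{n_d}{d_d}$ by cross-multiplication, as in the Nash case. In the multiplicative version we instead compare $\frac{n_c(d_\varepsilon+n_\varepsilon)}{d_c d_\varepsilon}$ with $\frac{n_d}{d_d}$, which is equally polynomial. The constraint checks $l_i\le\rho_i\le u_i$ are unchanged from Theorem~\ref{thm:NE-Q-memoryless-NP-upper}, since $\varepsilon$ does not enter them. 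We accept iff all $m$ player checks and all constraint checks succeed, which is polynomially many polynomial-time comparisons.

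There is essentially no new obstacle. The one point to verify is that introducing $\varepsilon$ in binary does not break the polynomial size bounds: it enters only through a single addition or multiplication at the very end, so the bit-size grows only additively. All the heavy lifting, namely polynomial-size hitting probabilities and the sufficiency of MDP policies against arbitrary deviations, is already provided by the earlier lemmas.
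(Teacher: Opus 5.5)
Your proposal is correct and follows the paper's own argument: reuse the guess-and-verify NP procedure from Theorem~\ref{thm:NE-Q-memoryless-NP-upper}, fold the binary-encoded $\varepsilon$ into the compliance probability (additively or multiplicatively) before the cross-multiplication comparison, check that the bit-size stays polynomial, and note that the payoff constraints do not involve $\varepsilon$. Your explicit fractions for $\rho_i+\varepsilon$ and $(1+\varepsilon)\rho_i$ simply spell out the ``straightforward modification'' that the paper leaves implicit.
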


\section{$\ell$-bit-representable in $\mathbb{Q}(r_1 \ldots r_m)$}\label{sec:Qext}

In the previous section, Section~\ref{sec:Q}, we used $\ell$-realizability to refer to numbers in $\mathbb{Q}$ that can be represented using at most $\ell$ bits. As mentioned in Section~\ref{sec:stateoftheart}, Nash equilibria may require the use of irrational~\cite{Nash51} or even irradical~\cite{irradicalprobsNash} probabilities. Therefore, restricting the numbers considered in the Nash equilibrium and the $\varepsilon$-equilibrium realizability problem to elements of $\mathbb{Q}$ may disqualify the only equilibria that exist in certain games.

In this section, we show how to extend the analysis in Section~\ref{sec:Q} to consider elements of $\mathbb{Q}(r_1 \ldots r_n)$. This is the field that the results from extending $\mathbb{Q}$ by radicals $r_i \in \mathbb{R}$ of the form $r_i = \sqrt[a]{b}$ for $a \in \mathbb{N}$, $b \in \mathbb{Q}$. This gives us an interesting perspective on the realizability problem -- by specifying \emph{which} {\bf radical-algebraic} real irrational numbers we are looking for, we can still retain the NP upper bounds of Section~\ref{sec:Q}. This result has a greater impact on the $\varepsilon$-realizability problem, which we explain later on, in Section~\ref{sec:discussion}. 

It is important to note that for a radical $r_i$, extending $\mathbb{Q}$ by $r_i$ results in a \emph{radical extension} that results in a field $\mathbb{Q}(r_i)$~\cite{algebra}. This property does not apply in general to all real numbers; for example, $\mathbb{Q}[\pi]$ is a ring, but not a field (thus, we cannot use the notation $\mathbb{Q}(\pi)$). It is also important to distinguish the generators of the extension from the basis, as this distinction plays a central role in our computational analysis. For example, consider the field $\mathbb{Q}(\sqrt{2},\sqrt{3})$, which is $\mathbb{Q}$ extended by the radicals $\sqrt{2}$ and $\sqrt{3}$. In this way, there are two generators of the extension, but an element of $\mathbb{Q}(\sqrt{2},\sqrt{3})$ is actually a 4-tuple of the form $a + b\sqrt{2} + c\sqrt{3} + d\sqrt{6}$ for $a,b,c,d \in \mathbb{Q}$, which corresponds to the basis of $\{1, \sqrt{2}, \sqrt{3}, \sqrt{6}\}$. 

\begin{remark}\label{rem:basis-not-generator}
    We assume that the field $\mathbb{Q}(r_1 \ldots r_m)$ is represented by its \emph{basis} elements, not by its generators. In the worst-case scenario, the number of basis elements can be exponential in the number of generators. In other scenarios, generators can be redundant (such as $\sqrt{2}$ and $\sqrt{8} = 2\sqrt{2}$). Therefore, we are assuming that the list of basis elements was prepared in an algebraic pre-computation step, and that we are fed the basis as input.  
\end{remark}

Remark~\ref{rem:basis-not-generator} gives us an important property. First, note that the number $d$ of elements of $\mathbb{Q}$ used to represent an element of $\mathbb{Q}(r_1 \ldots r_n)$ is part of the input, and therefore, we do not have to deal with exponentially large $d$-tuples of numbers. This exponential-tuple issue would potentially be an issue if we represented $\mathbb{Q}(r_1 \ldots r_m)$ by its generators $r_1 \ldots r_m$. Furthermore, by specifying the basis, we also specify the \emph{degree} of the extension~\cite{algebra} (the number of elements in an independent basis), which we make use of in our calculations. Given a basis $\{1, b_1, b_2 \ldots b_{d-1} \}$, we say that the number $a_0 +a_1 b_1 +a_2 b_2 + \ldots + a_{d-1} b_{d-1} $ is $\ell$-representable iff each $a_i \in \mathbb{Q}$ is $\ell$-representable. We now show that the critical polynomial bounds that held when multiplying over $\mathbb{Q}$ hold when multiplying over $\mathbb{Q}(r_1 \ldots r_n)$. 

\begin{observation}\label{obs:multiplying}
When multiplying two numbers in $\mathbb{Q}(r_1 \ldots r_n)$, an extension of degree $d$ , $a_0 +a_1 b_1 +a_2 b_2 + \ldots + a_{d-1} b_{d-1}$ and $a'_0 +a'_1 b_1 +a'_2 b_2 + \ldots + a'_{d-1} b_{d-1}$, we perform $O(d^2)$ multiplications, and then reduce to basis elements. Since the basis is specified in our input, $d$ is provided in unary in the input, and so $d^2$ is polynomial w.r.t the input. Therefore, the result of the multiplication, $p_0 +p_1 b_1 +p_2 b_2 + \ldots + p_{d-1} b_{d-1}$ can be computed in polynomial time and has a polynomial size representation w.r.t the input of the original two numbers. 
\end{observation}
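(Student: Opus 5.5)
The plan is to expand the product bilinearly and then use the multiplication table of the basis to reduce back to basis coordinates. Writing $b_0 = 1$, the product is
\[
\Big(\sum_{j=0}^{d-1} a_j b_j\Big)\Big(\sum_{k=0}^{d-1} a'_k b_k\Big) = \sum_{j,k} a_j a'_k \, b_j b_k ,
\]
which has exactly $d^2$ rational products $a_j a'_k$. Each factor is $\ell$-bit-representable, so each product has numerator and denominator of at most $\ell$ bits each. This is the same bound used in Section~\ref{sec:Q}.

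The reduction step needs each product of basis elements written as $b_j b_k = \sum_{s} c^{(jk)}_s b_s$ with rational structure constants $c^{(jk)}_s$. I would take these constants to be part of the basis data from the precomputation of Remark~\ref{rem:basis-not-generator}, so there are $d^3$ of them, each of size polynomial in the input. For the radical extensions considered here, products of basis monomials such as $\sqrt{2}\cdot\sqrt{6} = 2\sqrt{3}$ are rational multiples of single basis elements, so the constants are in fact quite simple. Given the constants, the output coordinate is
\[
p_s = \sum_{j,k} a_j a'_k \, c^{(jk)}_s .
\]
This is a sum of $d^2$ rationals, each a product of three polynomial-size rationals. Putting them over a common denominator multiplies the $d^2$ denominators, so the size grows by at most a factor of $d^2$. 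Since $d$ is given in unary, each $p_s$ has polynomial size and is computed with polynomially many arithmetic operations.

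The main obstacle is justifying that the structure constants are available and have polynomial size. If the basis were given only by generators, computing the constants could be expensive. I would handle this by folding the constants into the assumption of Remark~\ref{rem:basis-not-generator}. Failing that, I would argue directly for radical bases: each basis element is a product of radicals, so $b_j b_k$ is a rational multiple of one basis element, with a multiplier computable from the exponents and radicands.

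A further subtlety concerns later use, where many multiplications are chained, as in the products over $m$ players. There the sizes grow additively per step, as in Section~\ref{sec:Q}, so repeated application of the bound still gives polynomial size.
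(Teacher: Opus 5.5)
Your proposal is correct and follows the paper's argument: expand into the $d^2$ coordinate products $a_j a'_k$, reduce each $b_j b_k$ back to the basis, and conclude polynomiality because $d$ is given in unary; your explicit structure constants $c^{(jk)}_s$ just spell out the step the paper leaves as ``reduce to basis elements.'' Two small fixes are needed. First, a product of two $\ell$-bit rationals needs up to $2\ell$ bits in numerator and denominator, not $\ell$. Second, your closing remark about chained products does not follow from your own per-step bound, since a size bound that grows by a factor of $d^2$ per multiplication compounds to $d^{2m}$ over $m$ factors; to get genuinely additive growth, keep each element over a single common denominator with integer coordinates, so that each multiplication yields size at most the sum of the two input sizes plus $O(\log d)$ plus the size of the structure constants.
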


This critical observation allows us to essentially re-create the constructions given in Section~\ref{sec:Q}. Intuitively, in Section~\ref{sec:Q}, we dealt with numbers that were at most $\ell$ bits in length, and now we are dealing with $d$-tuples (where $d$ is implicitly provided in unary in the input as the length of the basis)  of numbers that are at most $\ell$ bits in length. The most interesting case is that of division, which involves computing an inverse. This can be done by inverting a polynomial-size system of equations, which is also solvable in polynomial time. Crucially, since the polynomial bounds on the complexity of individual arithmetic operations hold, we get analogs of Lemma~\ref{lem:k=1-Q-MCpolynomialrep} and Lemma~\ref{lem:k=1-Q-MDPpolynomialrep}. 

\begin{lemma}\label{lem:Qext-all-representation}
In the $\ell$-bit-representable  setting over $\mathbb{Q}(r_1 \ldots r_m)$, both the Markov Chain $\mathbb{G} \times \pi$ and the Markov Decision Processes $\mathbb{G}_i \times \pi$ can be constructed in polynomial time and have polynomial-size representations with respect to $\mathbb{G},\ell$ and the input field-extension basis $\{1,b_1 \ldots b_{d-1}\}$ (which constitutes a basis of the radical extension $\mathbb{Q}(r_1 \ldots r_m)$).
\end{lemma}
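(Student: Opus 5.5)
We follow the template of Lemma~\ref{lem:k=1-Q-MCpolynomialrep} and Lemma~\ref{lem:k=1-Q-MDPpolynomialrep}, replacing each arithmetic operation over $\mathbb{Q}$ by the corresponding operation over $\mathbb{Q}(r_1 \ldots r_m)$ in the coordinates of the basis $\{1,b_1 \ldots b_{d-1}\}$. Every value $E(q_1,q_2)$ and $h(q_1,\overline{a},q_2)$ is given by the same formulas as before. Each is a sum over at most $|R_\delta(q_1)|$ rows, and each summand is a product of at most $m$ strategy probabilities (one per non-${\ast}$ position of the pattern, with the $i$-th position skipped in the MDP case) times a $t$-representable rational probability from $\delta$. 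The entries of $\delta$ are rationals, so they embed as $d$-tuples $(p,0,\ldots,0)$. Thus it suffices to bound the cost and output size of these products and sums.

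\textbf{Step 1 (structure constants).} Fix the multiplication table of the basis: write $b_jb_k=\sum_s c_{jks}b_s$ with rational $c_{jks}$. For a radical extension given by a basis of monomials in the radicals, such as $\sqrt{2}\sqrt{3}=\sqrt{6}$ or $\sqrt[3]{2}\cdot\sqrt[3]{4}=2$, these constants are obtained by multiplying the radicands and reducing. I would treat this table as part of the algebraic pre-computation of Remark~\ref{rem:basis-not-generator}. It therefore has size polynomial in the input, with constants of some bounded bit-size $c$.

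\textbf{Step 2 (growth under products).} By Observation~\ref{obs:multiplying}, one product of two elements whose coordinates have bit-size $s$ and $s'$ yields coordinates of bit-size $O(s+s'+c+\log d)$. The $O(d^2)$ coordinate products are each bounded by $s+s'+c$ bits. Summing $d^2$ rationals then adds only $O(\log d)$ to the numerators, but the common denominator can grow additively in the summands' denominators. I would keep all coordinates over a common denominator to control this. Iterating over a chain of at most $m+1$ factors (the $m$ strategy entries of size $\ell$ plus one $t$-bit entry) gives coordinates of size $O(m(\ell+c+\log d)+t)$. This is polynomial because $m$, $\ell$ and $d$ are given in unary. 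The step runs in time polynomial in these quantities, since we perform $m$ multiplications each costing $O(d^3)$ rational operations.

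\textbf{Step 3 (sums and the blank row).} Summing at most $|R_\delta(q_1)|$ such terms is coordinatewise rational addition, which adds $O(\log|R_\delta(q_1)|)$ bits plus denominator growth. With a common denominator this growth is again polynomial. The final blank row, $1$ minus the other pattern probabilities, is a single further subtraction. Hence every entry of $E$ and $h$ is a $d$-tuple of polynomial-size rationals, computed in polynomial time. Since there are $|Q|^2$ entries of $E$ and $|Q|^2|A|$ entries of $h$, the lemma follows.

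The main obstacle is Step 2, namely keeping the bit-size from blowing up under repeated multiplication. Reduction via the structure constants could in principle multiply bit-sizes rather than add them. I would first argue that the chain length is only $m+1$, so even a per-step additive overhead of $O(c+\log d)$ bits yields only polynomial growth. I would also note explicitly that Step 2 relies on the basis being supplied together with its multiplication table. Division, which would need a $d\times d$ linear solve, is not required for this lemma; it only enters later, in the hitting-probability computation.
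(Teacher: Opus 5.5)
Your proposal is correct and follows essentially the paper's route: replay the constructions of Section~\ref{sec:Q} coordinatewise in the given basis, using Observation~\ref{obs:multiplying} and the fact that $d$ and $m$ are effectively given in unary to keep every entry of $E$ and $h$ (a sum of products of at most $m+1$ factors) polynomial-size, with division deferred to the later hitting-probability computation. Your structure-constant table and common-denominator bookkeeping make precise the additive size growth that the paper only asserts; the one slip is that clearing the denominators of an $\ell$-bit-representable element already costs $O(d\ell)$ rather than $O(\ell)$ bits, and the final bound is still polynomial.
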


As before, we can solve for the compliance probability in $\mathbb{G} \times \pi$ in polynomial time. As opposed to a single rational number, the solution is an element of $\mathbb{Q}(r_1 \ldots r_m)$ of the form $a_0 + a_1b_1 + \ldots a_{d-1}b_{d-1}$. 

\begin{lemma}\label{lem:Qext-compliance}
   In the $\ell$-bit-representable  setting over $\mathbb{Q}(r_1 \ldots r_m)$,  the compliance probability can be computed in polynomial time (and therefore, has a polynomial space representation).
\end{lemma}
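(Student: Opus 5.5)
We build on Lemma~\ref{lem:Qext-all-representation}, which gives the Markov Chain $\mathbb{G}\times\pi$ in polynomial time. Each entry $E(q_1,q_2)$ is an element of $\mathbb{Q}(r_1\ldots r_m)$, written as a $d$-tuple of rationals of polynomial bit-size. The compliance probability $\rho_i(\mathbb{G},\pi)$ is the hitting probability of $R_i$ from $q_0$.

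\textbf{Step 1: graph pre-processing.} Compute the set $Q_0$ of states that cannot reach $R_i$, and fix the value $1$ on $R_i$. This step only depends on which entries of $E$ are nonzero, that is, on which tuples are not identically zero. Since $\{1,b_1,\ldots,b_{d-1}\}$ is a basis, an element is zero exactly when all of its coordinates are zero. So the support graph is read off directly, and Step 1 is an ordinary polynomial-time graph reachability computation, exactly as in the $\mathbb{Q}$ case.

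\textbf{Step 2: the linear system over the extension.} On the remaining states we obtain a system $(I-E')x=b$ with entries in $\mathbb{Q}(r_1\ldots r_m)$. After the pruning this system has a unique solution. We do not invert over the field symbolically. Instead we \emph{linearize over $\mathbb{Q}$}, which is the same device the paper sketched for division.
\begin{itemize}
\item Multiplication by a fixed field element $\alpha$ is a $\mathbb{Q}$-linear map on $\mathbb{Q}^d$. Its $d\times d$ matrix has entries computable in polynomial time from the structure constants $b_jb_k=\sum_l c_{jkl}b_l$, which are available from the given basis by Observation~\ref{obs:multiplying}.
\item Replacing each entry of $I-E'$ by this $d\times d$ block gives a rational system $Mx=\beta$ of dimension $d|Q|$. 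Here $d$ is polynomial since it is given in unary, and all coefficients have polynomial bit-size.
\item The $\mathbb{Q}$-system has a unique solution because the extension system does: a $\mathbb{Q}$-solution corresponds exactly to an extension solution.
\end{itemize}
Gaussian elimination over $\mathbb{Q}$ then solves it in polynomial time. By Cor.~3.2a of~\cite{schrijver1998}, the solution has polynomial bit-size. Reading off the $d$ coordinates at $q_0$ gives $\rho_i = a_0+a_1b_1+\cdots+a_{d-1}b_{d-1}$ with polynomial-size rationals $a_j$.

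\textbf{Main obstacle.} The delicate point is Step 2: we must ensure the structure constants $c_{jkl}$ are available and have polynomial size. For products of radicals such as $\sqrt{2}\cdot\sqrt{6}=2\sqrt{3}$, these constants are rational. We rely on Remark~\ref{rem:basis-not-generator} and Observation~\ref{obs:multiplying}, which assume that reduction to basis elements is part of the given precomputation.

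A second, smaller check concerns positivity. The pruning in Step 1 uses only nonzero-ness, so it never needs to evaluate real signs of extension elements. Uniqueness comes from the Markov Chain structure, since the entries are genuine probabilities, that is, nonnegative reals. So no real-number comparison is needed at this stage.
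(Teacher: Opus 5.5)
Your proposal is correct and follows essentially the same route as the paper. The paper's proof also replaces each transition-matrix entry by its $d\times d$ rational (regular-representation) matrix, which turns the hitting-probability computation into a $|Q|d \times |Q|d$ rational problem handled exactly as in Lemma~\ref{lem:k=1-Q-hitting-prob-polynomial-representation}; $d$ stays polynomial because the basis is part of the input. You are more explicit than the paper on two points it leaves implicit: the pre-processing needs only zero-tests, which are coordinate-wise since $\{1,b_1,\ldots,b_{d-1}\}$ is a basis, and invertibility of the pruned system over the extension carries over to the linearized rational system.
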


\begin{proof}
%\myv{The sketch does not stand on its own. More context is needed, which is that of MDP analysis}\sr{This isn't MDP analysis -- it's just matrix operations to compute hitting probabilities for the compliance probability.}
%\myv{I meant MC analysis. Please add some context.} \sr{added some context.}
Just as in Section~\ref{sec:Q}, computing the compliance probability boils down to analyzing the Markov Chain $\mathbb{G} \times \pi$. Ultimately, the proof comes down to the observation that we are working in an $|Q|d \times |Q|d$ matrix, as opposed to an $|Q| \times |Q|$ matrix as in Lemma~\ref{lem:k=1-Q-hitting-prob-polynomial-representation}. This is because each element of the transition matrix can be represented as a $d \times d$ matrix~\cite{algebra}, as opposed to the singletons used in the analysis in Section~\ref{sec:Q}. By leveraging the fact that $d$ is implicitly provided in unary in the input, $|Q|d$ is still polynomial in the size of the input, meaning that the previous arguments largely hold with one exception -- multiplication between basis elements is done symbolically.  For example, we would not create a decimal expansion of $\sqrt{2} \times \sqrt{3}$, but rather represent it as $\sqrt{6}$. Throughout the course of our operations, it may be the case that we perform other actions on the basis elements, such as raising $\sqrt{3}$, a basis element, to some power $p$ (or multiplying many basis elements together). Because we are only ever doing a polynomial number of operations, this $p$ must be polynomial, and since each operation increases the size of representation additively (Observation~\ref{obs:multiplying}), the results of these symbolic calculations can also be represented using polynomially many bits. 
\end{proof}
% \jfr{Is the lemma easy to establish ? How do you bound the bit size representation of the values that multiply the elements of the basis in the computed probabilities ?} \sr{Should I explicitly prove this ? The basic idea is that we are not working with an $n$ by $n$ matrix, but rather an $nd$ by $nd$ matrix.}

Next, we must be able to compute the deviation probability. The missing ingredient is the ability to compare two elements of $\mathbb{Q}(r_1 \ldots r_m)$: even if we have two elements $x = x_0 + x_1b_1 \ldots x_{d-1}b_{d-1}$, $y = y_0 + y_1b_1 \ldots y_{d-1}b_{d-1}$ that have polynomial-sized representations with respect to $\mathbb{G}, \ell$ and the basis $\{1, b_1 \ldots b_{d-1}\}$, we must be able to decide if $x \geq y$. This allows us to solve an MDP in the $\mathbb{Q}(r_1 \ldots r_m)$ setting, as it allows us to use linear programming methods, through the same logic employed in Section~\ref{sec:Q}. Furthermore, we need to ultimately compare the compliance probabilities with the deviation probabilities and payoff constraints. Unlike Section~\ref{sec:Q}, such comparisons cannot be done directly through cross-multiplication. Note that the bounds on the representations of numbers prevent a direct comparison to the Square-Root-Sum problem~\cite{complexityzoo}.

\begin{theorem}\label{thm:compare-Qext}
    Given two elements of $\mathbb{Q}(r_1 \ldots r_m)$, $x = x_0 + x_1b_1 \ldots x_{d-1}b_{d-1}$, $y = y_0 + y_1b_1 \ldots y_{d-1}b_{d-1}$ such that each $x_i,y_i$ is a rational number that has a polynomial size representation w.r.t the input of $\mathbb{G}, \ell$ and the basis $B = \{1, b_1, \ldots b_{d-1}\}$, it can decided if $x \geq y$ in polynomial time.
\end{theorem}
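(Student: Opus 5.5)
The plan is to reduce the comparison to a sign test and then decide the sign by numerical approximation, justified by an explicit root-separation bound. Set $z = x - y = c_0 + c_1 b_1 + \ldots + c_{d-1} b_{d-1}$ with $c_j = x_j - y_j \in \mathbb{Q}$. Each $c_j$ is computed exactly in polynomial time, and $x \geq y$ iff $z \geq 0$.

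\textbf{Deciding $z = 0$.} This case is trivial. Since $\{1, b_1, \ldots, b_{d-1}\}$ is a basis, so linearly independent over $\mathbb{Q}$, we have $z = 0$ iff every $c_j = 0$. It therefore only remains to determine the sign of a nonzero $z$.

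\textbf{Lower bound on $|z|$.} I would bound $|z|$ from below by a Liouville-type norm argument.
\begin{itemize}
\item Each basis element $b_j$ is, as in Remark~\ref{rem:basis-not-generator}, a product of radicals $\sqrt[a]{q}$ with $q \in \mathbb{Q}$. After multiplying by a suitable integer $D_j$ (clearing the denominators of the radicands to the appropriate power), $D_j b_j$ is an algebraic integer.
\item Let $D$ be the product of a common denominator of the $c_j$ and the $D_j$. Then $Dz$ is a nonzero algebraic integer lying in a field $K$ of degree $d$.
\item The field norm $N_{K/\mathbb{Q}}(Dz) = \prod_{\sigma} \sigma(Dz)$, taken over the $d$ embeddings $\sigma : K \to \mathbb{C}$, is a nonzero rational integer. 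Hence $|N_{K/\mathbb{Q}}(Dz)| \geq 1$.
\item Every embedding sends a radical $\sqrt[a]{q}$ to $\omega \sqrt[a]{q}$ for some root of unity $\omega$. So $|\sigma(b_j)| = |b_j|$, and $|\sigma(Dz)| \leq M := D \sum_j |c_j|\,|b_j|$.
\end{itemize}
Combining these gives $|z| \geq D^{-1} M^{-(d-1)}$. Since $d$ is given in unary and $\log D$ and $\log M$ are polynomial in the input, this bound has the form $2^{-p}$ for a polynomial $p$.

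\textbf{Numerical sign test.} I would then approximate each $b_j$ to within $2^{-p'}$ for a polynomial $p'$ chosen so that the accumulated error in $\tilde z = \sum_j c_j \tilde b_j$ is below $2^{-p-1}$. This uses that the $c_j$ have polynomially many bits. Each approximation is obtained by bisection or Newton iteration on $X^a - q$, followed by a product of polynomially many such approximations. The algorithm outputs the sign of $\tilde z$, which is correct because $|z - \tilde z| < |z|/2$. All steps run in time polynomial in the input, $\ell$, and $d$.

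The main obstacle I expect is making the lower bound rigorous in the chosen representation. First, it requires a precise account of how each $b_j$ is encoded, namely as a product of radicals with polynomial-size indices and radicands. That encoding is what guarantees that $\log D$, $\log M$ and the bit cost of approximating $b_j$ are all polynomial. Second, the bound relies on the conjugates of $z$ having modulus bounded by $M$. Here I would verify carefully that the degree of $z$ over $\mathbb{Q}$ is at most $d$, which holds because $z \in K$ and $[K:\mathbb{Q}] = d$. It is this degree bound, rather than the degree of the possibly much larger Galois closure, that enters the exponent of the separation bound.
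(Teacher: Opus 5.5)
Your proposal is correct and is essentially the paper's argument. Both reduce to the sign of $z=x-y$, settle $z=0$ by linear independence of the basis, use $|N(Dz)|\geq 1$ to get $|z|\geq 2^{-p}$ for a polynomial $p$, and then decide the sign from polynomially many bits of an approximation. Yours is in fact the more careful version in two places: clearing the radicands' denominators makes $Dz$ a genuine algebraic integer (the paper's claim that $|N|\geq 1$ on $\mathbb{Z}[r_1\ldots r_m]$ fails for, e.g., $r=\sqrt{1/2}$, where $N(r)=-1/2$), and bounding every conjugate by $M$ via $|\sigma(b_j)|=|b_j|$ makes explicit the cofactor bound $|N(z)/z|\leq 2^{p}$ that the paper only sketches.
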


\begin{proof}

    The basic idea behind this proof is to enumerate the decimal expansions $x$ and $y$ bit by bit. Although $x$ and $y$ have infinite representations as decimals, the main idea of the proof is that 
    %\myv{Explain what is meant by "polynomial-size bounds".} \sr{repeated the meaning}
    since $x$ and $y$ obey polynomial-size bounds (i.e., have a representation that is polynomial w.r.t $\mathbb{G},\ell,$ and $|B|$), 
    we only need to compare polynomially many bits of their decimal expansions. Therefore, even though radicals like $\sqrt{2}$ (to use one specific example) have an infinite decimal representation, we are able to efficiently compare $\mathrm{poly}(\mathbb{G},\ell,|B|)$-representable elements of $\mathbb{Q}(\sqrt{2})$ by enumerating only polynomially many bits of their decimal expansions. 
    
    The first step is to consider the difference $z = (x_0-y_0) + (x_1-y_1)b_1 + \ldots (x_{d-1}-y_{d-1}) b_{d-1}$ with the ultimate goal being to decide $z \geq 0$. In order for $z$ to be equal to $0$, we must have $\forall i \in \{0 \ldots d-1\}. x_i = y_i $, and we can test for this in polynomial time. Otherwise, we can conclude that $z \not = 0$, and we return to deciding $z > 0$.    Since the RHS of the inequality is $0$, we can clear the denominators of $z$ by multiplying through by the common denominators of all $x_i - y_i$, This involves at most a polynomial number of multiplications by polynomial-sized numbers, and so, the new representation of $z = z_0 + z_1b_1 \ldots z_{d-1}b_{d-1}$ can be computed in polynomial time w.r.t $\mathbb{G}, \ell$ and the basis. It is important to note that $z$ is now an element of the ring $\mathbb{Z}[r_1 \ldots r_m]$, as we now have $z_i \in \mathbb{Z}$ for all $i \in \{0 \ldots d-1\}$.

    The ring $\mathbb{Z}[r_1 \ldots r_m]$ admits 
    %\myv{Ring "norm" shows up here out of the blue.}\sr{Rewrote so its a little less jarring, but I'm not sure what else to do -- we define what the norm is here itself, so it seems we did what we could.}
    a \emph{norm} $N : \mathbb{Z}^d \rightarrow \mathbb{Z}$, 
    a homogeneous multivariate polynomial of degree $d$ with the property that for all nonzero $a$, $|N(a)| \geq 1$~\cite{algebra}. The norm can be evaluated symbolically in polynomial time by multiplying conjugates~\cite{algebra}, which can be generated symbolically since all $r_i$ are radical expressions (see e.g. Example 4.1 in~\cite{adlerbeling}). Previously, we mentioned that $z$ could be viewed as a $d$-tuple of integers $[z_0 \ldots z_{d-1}]$, and so the next step is to apply the norm $N$ to $z$. 

    Since the norm is a polynomial of degree $d$ and $d$ is given in unary in the input, $N(z)$ has a polynomial size representation and is therefore at most singly exponential in value w.r.t. the input. Therefore, if the size of the input is $|i|$, we get $|\frac{N(z)}{z}| \leq 2^{p(|i|)}$ for some polynomial $p(\cdot)$. Since $|N(z)| \geq 1$, we get $|z| \geq 2^{-p(|i|)}$. This is the statement we need, as it lets us test the sign of $z$ by expanding a polynomial number of bits. Once we can decide whether $z > 0$ or $z < 0$, this lets us determine whether $x > y$ or $x < y$ (recalling that we already accounted for the $x=y$ case earlier).   
    
\end{proof}

To assist readers unfamiliar with radical extensions and norms, we provide a brief example illustrating how these norm inequalities operate in a simple field.

\begin{example}
    We work in the field $\mathbb{Q}(\sqrt{2})$. Take a number $y = a + b\sqrt{2}$, with $a$ and $b$ $\ell$-bit-representable rational numbers. After clearing denominators, we have $y \in \mathbb{Z}[\sqrt{2}]$, and, more directly, $a,b \in \mathbb{Z}$. The norm of the ring $\mathbb{Z}[\sqrt{2}]$ is $[a,b] \mapsto a^2 - 2b^2 = (a - b\sqrt{2})( a+ b\sqrt{2})$. Therefore, we have $$ |a + b\sqrt{2}| = \frac{|a^2 -2b^2|}{|a - b\sqrt{2}|} \implies a,b \:\mathrm{ nonzero } \iff |a + b\sqrt{2}| \geq \frac{1}{|a - b\sqrt{2}|}$$
    Here, the replacement of $|a^2 -2b^2|$ by $1$ is justified by the fact that since $a^2 -2b^2$ is the norm, we have $|a^2 - 2b^2| \geq 1$ for all non-zero $a,b$. Since we have bounds on the size of $a$ and $b$ ($\ell$-bit-representable), we can get a bound on the size (and therefore value) of $|a - b\sqrt{2}|$. This bound can then be used to get a bound on $|a + b\sqrt{2}|$ through the above inequality.  In general, for a degree $d$ extension, we get a polynomial of degree $d-1$ in the denominator, and since $d$ is implicitly provided in unary in the input, we get the required size bounds. This can be seen by noting that the size of a term $n^d$ is given by the logarithm, which is $d\log(n)$. In this specific case, $\mathbb{Q}(\sqrt{2})$, $d = 2$, and so we have a linear expression in the denominator.
\end{example}

Theorem~\ref{thm:compare-Qext} fills in our missing ingredient, as it allows us to compare polynomial-sized elements of $\mathbb{Q}(r_1 \ldots r_m)$ in polynomial time. This lets us compute the deviation probability, as solving linear programs (to solve the MDPs $\mathbb{G}_i \times \pi$) requires comparison operations~\cite{adlerbeling,Puterman94} in addition to the standard arithmetic operations discussed in Observation~\ref{obs:multiplying}.
%\myv{Not clear how you go from the  theorem to the lemma.}\sr{Solving an LP requires comparisons, so you need to be able to use $\geq$ in addition to standard arithmetic operations. Added a little more of an explanation in there.}
\begin{lemma}\label{lem:Qext-deviation}
   In the $\ell$-bit-representable over $\mathbb{Q}(r_1 \ldots r_m)$ setting,  the deviation probability can be computed in polynomial time (and therefore, has a polynomial space representation).
\end{lemma}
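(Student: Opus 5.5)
The plan follows the proof of Lemma~\ref{lem:k=1-Q-hitting-prob-polynomial-representation}, replacing rational arithmetic by arithmetic in $\mathbb{Q}(r_1 \ldots r_m)$. Exact arithmetic there is polynomial time by Observation~\ref{obs:multiplying} and the inverse-via-linear-system remark. Exact sign tests are polynomial time by Theorem~\ref{thm:compare-Qext}.

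First, I would build the MDP $\mathbb{G}_i \times \pi$ using Lemma~\ref{lem:Qext-all-representation}. Every transition value $h(q_1,\overline{a},q_2)$ is then a $d$-tuple of polynomial-size rationals. Next comes qualitative preprocessing, which depends only on which transitions are nonzero. Using the equality test from the proof of Theorem~\ref{thm:compare-Qext}, I would determine the support graph. By standard graph algorithms I would then compute the states from which $R_i$ is reachable with probability $0$ under every policy, and make $R_i$ absorbing. This is needed so that the linear systems below have unique solutions.

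Second, I would find an optimal policy by working over the ordered field $\mathbb{Q}(r_1 \ldots r_m) \subseteq \mathbb{R}$ with exact arithmetic. One option is to run the reachability LP, $\min \sum_q v_q$ subject to $v_q \geq \sum_{q'} h(q,\overline{a},q')v_{q'}$, $v_q=1$ on $R_i$, $v_q=0$ on the zero states. The other is policy iteration, which I prefer. Every step uses only field operations and comparisons, so the algorithm is correct over any ordered field. For each candidate deterministic memoryless policy $\sigma$, I would solve $(I - E^*_\sigma)v = b$ exactly. This is a $|Q|\times|Q|$ system over the extension, equivalently a $|Q|d \times |Q|d$ rational system obtained from the regular representation, as in the proof of Lemma~\ref{lem:Qext-compliance}. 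By Cramer's rule and Cor.~3.2a of~\cite{schrijver1998}, its solution has polynomial size. Improvement steps compare $\sum_{q'} h(q,\overline{a},q')v_{q'}$ across actions with Theorem~\ref{thm:compare-Qext}. Each compared quantity has polynomial size, being products and sums of polynomial-size elements.

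The main obstacle is bounding the \emph{number} of pivots or iterations, and the size of intermediate numbers, polynomially in a model with no bit-length $L$. Policy iteration is not polynomial in general, and the usual interior-point and ellipsoid bounds rely on rational bit sizes. I would first handle the size of intermediates. Every value ever computed is the value vector of some deterministic memoryless policy, or an affine expression in one. So it is a solution of a system of the fixed shape above, with polynomial size uniformly, regardless of how it was reached; numbers cannot blow up across iterations. For the iteration count, I would encode everything in the $|Q|d$-dimensional rational representation. There I would try to invoke a strongly polynomial, or $\mathrm{poly}(\ell,d)$-bounded, LP algorithm over ordered fields in the style of~\cite{adlerbeling}. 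As a fallback, I would use the interior-point bound with $L$ replaced by the polynomial separation bound $|z| \geq 2^{-p(|i|)}$ from Theorem~\ref{thm:compare-Qext}. That bound plays exactly the role of the $2^{-L}$ gap in the rational analysis: an approximate optimum within this gap identifies the optimal basis, and hence an optimal policy, which is then evaluated exactly as above to give the deviation probability.
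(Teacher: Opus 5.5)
Your proposal is correct and follows the same route as the paper. You solve the deviation MDP $\mathbb{G}_i \times \pi$ by linear programming over $\mathbb{Q}(r_1 \ldots r_m)$, with Observation~\ref{obs:multiplying} supplying arithmetic, Theorem~\ref{thm:compare-Qext} supplying comparisons, and \cite{adlerbeling} supplying the polynomial bound on the LP itself; you then evaluate the extracted policy exactly as in Section~\ref{sec:Q}. You are more explicit than the paper, which handles the iteration count only through that citation; what closes the step is the size-and-degree-polynomial bound of \cite{adlerbeling}, which is essentially your norm-based separation fallback, not a strongly polynomial algorithm.
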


The final step is then to compare the compliance probabilities, deviation probabilities, and payoff constraints in polynomial time. Since all of these numbers have polynomial size w.r.t the input (Lemmas~\ref{lem:Qext-compliance} and~\ref{lem:Qext-deviation}, and the payoff constraints are part of the input itself), this follows directly from applying Theorem~\ref{thm:compare-Qext}.

This means that for a given strategy profile $\pi$ (which, we recall, was originally non-deterministically chosen, see Section~\ref{sec:constructions}), we can verify if $\pi$ is a Nash equilibrium in polynomial time, giving us an NP upper bound for our realizability problem.
%\myv{Why not in polynomial time?}\sr{Because we started by guessing the memoryless strategy. We guess in nondeterministic polynomial time, and then verify in polynomial time.}\myv{Add that context.} \sr{Added.}
\begin{theorem}\label{thm:NE-Q-ext-NP-upper}
    The $\ell$-bit-representable-Nash Equilibrium Realizability problem over $\mathbb{Q}(r_1 \ldots r_m)$ belongs to NP when the basis of $\mathbb{Q}(r_1 \ldots r_m)$ is provided in the input.
\end{theorem}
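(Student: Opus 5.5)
The plan is the guess-and-verify scheme of Section~\ref{sec:constructions}, now carried out over the extension field. Because the basis $\{1,b_1,\ldots,b_{d-1}\}$ is part of the input, $d$ is effectively given in unary.

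\emph{Guessing the profile.} A memoryless $\ell$-bit-representable profile $\pi$ assigns to each state $q$, player $i$ and action $a$ a $d$-tuple of $\ell$-bit rationals. The whole certificate therefore has size $O(|Q|\cdot m\cdot|A|\cdot d\cdot\ell)$, which is polynomial, so the NP machine guesses it. The guess should also contain a well-formedness check that each $\pi_i(q)$ is a distribution:
\begin{itemize}
\item The coefficient vectors must sum to $(1,0,\ldots,0)$. This is exact rational arithmetic per coordinate.
\item Each entry must be nonnegative. This is a comparison with $0$, which Theorem~\ref{thm:compare-Qext} decides in polynomial time.
\end{itemize}

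\emph{Verification, per player $i$.}
\begin{enumerate}
\item Build $\mathbb{G}\times\pi$ and $\mathbb{G}_i\times\pi$ in polynomial time with polynomial-size entries, by Lemma~\ref{lem:Qext-all-representation}.
\item Compute the compliance probability $\rho_i(\mathbb{G},\pi)$ by Lemma~\ref{lem:Qext-compliance}.
\item Compute the deviation probability by Lemma~\ref{lem:Qext-deviation}. It must be argued that the MDP value equals the supremum over \emph{arbitrary} deviating strategies $\pi^*:Q^*\to{\sf Dist}(A)$. This holds because, once the other players are fixed to memoryless $\pi_{-i}$, Player~$i$ faces exactly the MDP $\mathbb{G}_i\times\pi$. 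In a reachability MDP the optimum over history-dependent randomized strategies is attained by a deterministic memoryless policy~\cite{Puterman94}, and that fact depends only on the ordered-field structure of $\mathbb{R}$, not on the coefficients being rational.
\item Each of these numbers is an element $a_0+\sum a_jb_j$ with polynomial-size rational coefficients. Apply Theorem~\ref{thm:compare-Qext} to check that the deviation value is at most $\rho_i(\mathbb{G},\pi)$.
\item Apply Theorem~\ref{thm:compare-Qext} twice more to check $l_i\le\rho_i(\mathbb{G},\pi)\le u_i$, since the constraints are themselves $\ell$-bit-representable elements of the extension.
\end{enumerate}
If every check passes, accept.

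\emph{Correctness.} Acceptance holds exactly when some guessed memoryless $\ell$-bit-representable profile is a Nash equilibrium satisfying $C$. Soundness and completeness follow from the equivalence, established in step~3, between unilateral deviations and MDP policies.

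\emph{Main obstacle.} The delicate point is step~3, namely that Lemma~\ref{lem:Qext-deviation} really yields a value whose coefficients have polynomial size. I would argue it by first guessing the optimal deterministic memoryless policy as part of the certificate, which avoids running LP over the extension. Fixing that policy turns $\mathbb{G}_i\times\pi$ into a Markov chain, whose hitting probability is computed as in Lemma~\ref{lem:Qext-compliance}. Optimality is then certified by the Bellman inequalities: at every state and every action, the chosen value must be at least the one-step expectation of the values under that action. These are polynomially many comparisons, each decided by Theorem~\ref{thm:compare-Qext}, after the usual preprocessing that sets states unable to reach $R_i$ to value $0$ so the fixed point is unique. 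This keeps every quantity at polynomial size and makes the whole verifier polynomial time, giving membership in NP.
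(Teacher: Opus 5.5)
Your proposal is correct. Its skeleton is the paper's: guess $\pi$, build $\mathbb{G}\times\pi$ and $\mathbb{G}_i\times\pi$, compute the compliance and deviation values, and compare them via Theorem~\ref{thm:compare-Qext}. The genuine difference is the deviation step.

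\emph{The paper's route.} It obtains the deviation probability deterministically (Lemma~\ref{lem:Qext-deviation}) by running linear programming over the extension field, with Theorem~\ref{thm:compare-Qext} supplying the comparisons the LP solver needs. This gives the stronger intermediate fact that, for a fixed profile, the deviation value is computable in deterministic polynomial time. The cost is a reliance on polynomial-time LP over an ordered algebraic extension, and the paper only sketches how intermediate bit sizes stay controlled there.

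\emph{Your route.} You place an optimal deterministic memoryless policy $\sigma$ of the deviator into the certificate and verify it with Bellman inequalities. This needs only exact linear-system solving over the extension, which Lemma~\ref{lem:Qext-compliance} already requires, plus polynomially many sign tests. Every intermediate quantity is an explicit polynomial-size field element, and the extra guess costs nothing inside NP. Completeness uses the standard fact that a finite reachability MDP has a deterministic memoryless policy that is optimal from all states at once.

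One correction to your justification: soundness of the certificate does not rest on uniqueness of the fixed point. Since $v^\sigma$ is the value of an actual policy, $v^\sigma\le v^*$. The checks $v^\sigma(q)\ge\sum_{q'}h(q,a,q')\,v^\sigma(q')$ for all $q\notin R_i$ and $a\in A$ say that $Bv^\sigma\le v^\sigma$ for the monotone Bellman operator $B$. The least fixed point $v^*$ of $B$ lies below every such vector, so $v^*\le v^\sigma$. The preprocessing of states that cannot reach $R_i$ is needed only to make the linear system defining $v^\sigma$ uniquely solvable.
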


These comparisons still hold in the $\varepsilon$ setting, as, given a compliance probability $c = c_0 + c_1b_1 + \ldots + c_{d-1}b_{d-1}$ both the additive definition $c_{\varepsilon} = (c_0 + \varepsilon) + c_1b_1 + \ldots + c_{d-1}b_{d-1}$ and the multiplicative definition $c_{\varepsilon} = (1+ \varepsilon) c_0  +  (1+ \varepsilon)c_1b_1 + \ldots + (1+ \varepsilon)c_{d-1}b_{d-1}$ result in numbers that have polynomial-sized representations, and, therefore, ones we can compare using Theorem~\ref{thm:compare-Qext}, recalling that $\varepsilon$ is an $\ell$-representable rational number provided in the input. 

\begin{theorem}\label{thm:eNE-Q-ext-NP-upper}
        The $\ell$-bit-representable $\varepsilon$-Equilibrium Realizability problem over $\mathbb{Q}(r_1 \ldots r_m)$ belongs to NP when the basis of $\mathbb{Q}(r_1 \ldots r_m)$ is provided in the input.
\end{theorem}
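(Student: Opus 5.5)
The plan is to reuse the NP algorithm behind Theorem~\ref{thm:NE-Q-ext-NP-upper} unchanged and modify only the final comparison. First, guess a memoryless $\ell$-bit-representable profile $\pi$. Each probability is a $d$-tuple of $\ell$-bit rationals over the given basis $\{1,b_1,\ldots,b_{d-1}\}$, and $d$ is implicitly unary, so the guess has polynomial size. Then, for each player $i$:
\begin{itemize}
\item Lemma~\ref{lem:Qext-all-representation} builds $\mathbb{G}\times\pi$ and $\mathbb{G}_i\times\pi$ in polynomial time.
\item Lemma~\ref{lem:Qext-compliance} gives the compliance probability $c=c_0+c_1b_1+\ldots+c_{d-1}b_{d-1}$ with polynomial-size rational coefficients.
\item Lemma~\ref{lem:Qext-deviation} gives the deviation probability $v$ in the same form; this relies on Theorem~\ref{thm:compare-Qext} for the comparisons inside the LP solver.
\end{itemize}

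Next, form the $\varepsilon$-shifted compliance value. For the additive definition, add the input rational $\varepsilon$ to the coefficient of the basis element $1$ only: $c_\varepsilon=(c_0+\varepsilon)+c_1b_1+\ldots$. For the multiplicative definition, multiply every coefficient by the rational $1+\varepsilon$. In both cases each coefficient grows by at most the bit-length of $\varepsilon$ plus a constant. Since $\varepsilon$ is given in binary in the input, $c_\varepsilon$ still has a representation polynomial in $\mathbb{G}$, $\ell$, $|B|$ and $|\varepsilon|$. Theorem~\ref{thm:compare-Qext} then decides $c_\varepsilon\geq v$ in polynomial time. Its norm argument gives $|z|\geq 2^{-p(|i|)}$ for the nonzero difference $z=c_\varepsilon-v$, where the polynomial $p$ must now also account for $|\varepsilon|$; this is harmless because $|\varepsilon|$ is part of the input length. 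The payoff constraints $l_i\leq c\leq u_i$ do not involve $\varepsilon$ and are checked with the same comparison procedure, exactly as in the Nash case. Accept if and only if all checks pass for every player.

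The only step needing care is confirming that Theorem~\ref{thm:compare-Qext} still applies once $\varepsilon$ is mixed in. The theorem is stated for elements whose coefficients are polynomial in $\mathbb{G}$, $\ell$ and $B$. Adding or multiplying by a rational of input-polynomial bit-length preserves this, since clearing denominators in $z$ multiplies by at most one extra input-size denominator. So the norm bound and the polynomial number of expanded bits remain polynomial in the total input size. I expect no genuine obstacle beyond this bookkeeping. Verification is therefore deterministic polynomial time after a polynomial-size guess, which places the problem in NP.
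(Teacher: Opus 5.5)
Your proposal is correct and follows the paper's own argument: you reuse the guess-and-verify procedure from the Nash case, form $c_\varepsilon$ either by adding $\varepsilon$ to the constant coefficient (additive definition) or by scaling every coefficient by $1+\varepsilon$ (multiplicative definition), note that it keeps a polynomial-size representation, decide $c_\varepsilon \geq v$ with Theorem~\ref{thm:compare-Qext}, and check the payoff constraints without $\varepsilon$. The only difference is bookkeeping: you count the size of $\varepsilon$ as part of the binary input length, whereas the paper appeals to $\varepsilon$ being $\ell$-representable, and either choice suffices.
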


\section{General Lower Bound for $\mathbb{Q}$ and $\mathbb{Q}(r_1\ldots r_m)$}\label{sec:lowerbound}

\begin{figure}
\centering
\includegraphics[width=0.8\textwidth]{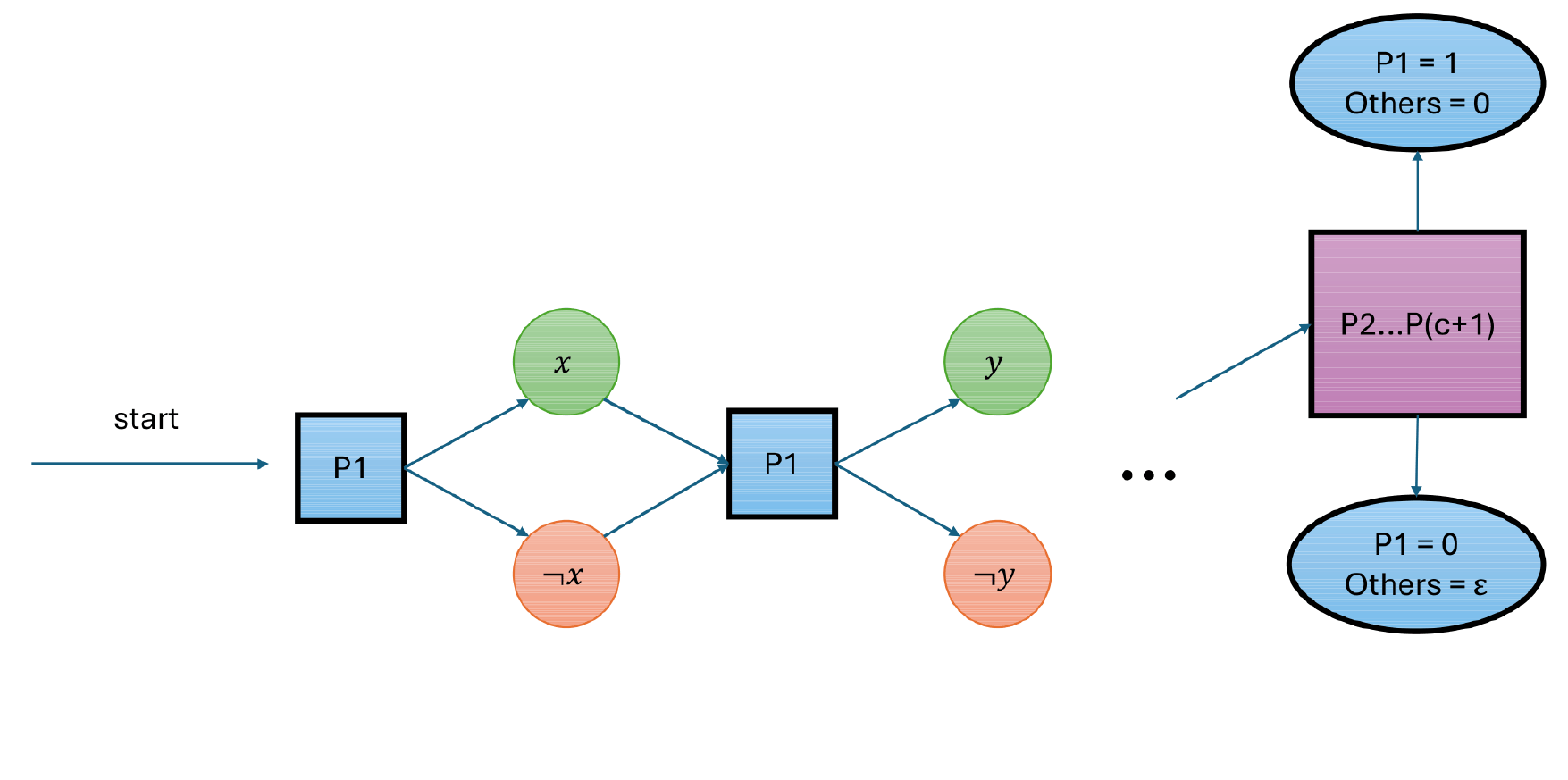}
\caption{The basic schematics for the game used in the NP-hard reduction from 3SAT.}
\label{fig:lowerboundgame}
\end{figure}

Figure~\ref{fig:lowerboundgame} provides the basic framework of our NP-hardness reductions from the 3CNF problem. For a 3CNF formula $\phi$ with $v$ variables and $k$ clauses, we create the game $\mathbb{G}_{\phi}$ depicted in Figure~\ref{fig:lowerboundgame} with $O(v)$ vertices and $k+1$ players. This type of reduction is similar to previous work in the literature, such as~\cite{BBMU15}. As this reductional framework has appeared in the literature before, we focus on a high-level description of how it differs in our setting, as opposed to reiterating already-established low-level details.  

The basic idea is that there is one more player in $\mathbb{G}_{\phi}$ than the number of clauses in $\phi$.  This extra player is Player $1$. For the rest of the players, Player $i$ is ``assigned'' to $C_{i-1}$, the $i-1$st CNF clause in $\phi$.  At the start of the game, Player $1$ sequentially creates an assignment to the $v$ variables by choosing to visit either a positive or a negative form of a variable (e.g. $x$ vs $\neg x$). Exactly one must be visited for each variable. The goal of Player $i$ for $i \not = 1$ then corresponds to the literals in $C_{i-1}$ -- if $C_{i-1}$ is $(x \vee y \vee \neg z)$, for example, then Player $i$'s reachability goal is $\{x,y,\neg z\}$. At the end of Player $1$'s assignment, all other players with clause goals get to vote (corresponding to the purple state), and the resulting transition depends on whether all ``clause'' players vote positively or not. These voting dynamics are succinctly represented by the compressible table format, using one row to capture the transition when all clause players vote positively and then the extra row to capture all other cases (i.e. at least one player votes negatively). If they all vote positively, then no further clause goals are visited, and Player $1$ gets a payoff of $1$. Otherwise, if there is even one dissenter amongst the clause players, Player $1$ gets $0$, and every clause player visits a mechanism that reaches their goal with probability $\varepsilon$.  Here, $\varepsilon$ represents either the input constant $\varepsilon$ from the specification of an $\varepsilon$-equilibrium or an arbitrary non-zero constant for the Nash equilibrium.

Correctness can be seen by asking for a constrained Nash or $\varepsilon$ equilibrium in which Player $1$ gets a payoff of $1$. There is only one outcome that guarantees that -- Player $1$ successfully navigates through the variable assignment and all clause players vote to give him a payoff of $1$. The only way for all clause players to be incentivized to vote positively at the purple vertex is for them to have already previously satisfied their reachability goals in the assignment phase; otherwise, they would be choosing between a payoff of $0$ and $\varepsilon > 0$.  The only way for this to be possible is for there to be a satisfying assignment to $\phi$, giving us NP-hardness for our realizability problems.

\begin{theorem}\label{thm : all-NP-complete}
        The $\ell$-bit-representable-Nash Equilibrium Realizability problem and $\ell$-bit-representable $\varepsilon$-equilibrium Realizability problems over both $\mathbb{Q}$ and $\mathbb{Q}(r_1 \ldots r_m)$ are all NP-hard, and, therefore, NP-complete. 
\end{theorem}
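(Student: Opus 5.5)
The argument has two halves. Membership in NP is already given by Theorems~\ref{thm:NE-Q-memoryless-NP-upper}, \ref{thm:eNE-Q-memoryless-NP-upper}, \ref{thm:NE-Q-ext-NP-upper} and \ref{thm:eNE-Q-ext-NP-upper}, so it remains to establish NP-hardness. I would do this with a single polynomial-time reduction from 3SAT, namely the game $\mathbb{G}_\phi$ of Figure~\ref{fig:lowerboundgame}. The constraint set fixes $l_1=u_1=1$ for Player~$1$ and puts no constraint on the clause players, i.e.\ $[0,1]$. I would take $\ell$ to be a small constant (say $\ell=2$) so that all strategies used in the witness are pure. Since $\mathbb{Q}\subseteq\mathbb{Q}(r_1\ldots r_m)$ and a pure profile is $\ell$-representable in any basis, the same instance works for all four problems. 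For the $\varepsilon$ variants, the consolation gadget uses the input $\varepsilon$ itself (assuming $0<\varepsilon<1$ rational). For the Nash variants, I would hard-code any rational $\varepsilon\in(0,1)$, for example $1/2$.

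\textbf{Construction and size.} The game has a chain of choice states owned by Player~$1$ (turn-based, encoded as in Remark~\ref{rem:turn-based-encoding}). For each variable there is a choice between a state $x$ and a state $\neg x$, and both then continue to the next variable. After the last variable comes the purple voting state. Its compressible table has a single row in which every clause player plays ``yes'' and Player~$1$ is ${\ast}$; this row leads to a sink in $R_1$. The default blank row leads to a gadget that enters a state lying in every clause goal $R_i$ ($i\ge 2$) with probability $\varepsilon$, and otherwise enters a non-goal sink. Neither branch reaches $R_1$. Clause player $i$ has goal $R_i=\{$literal states of $C_{i-1}\}\cup\{$gadget state$\}$. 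The game has $O(v)$ states, $k+1$ players and $O(1)$ rows per state, so the construction is polynomial.

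\textbf{Correctness.} ($\Rightarrow$) Given a satisfying assignment, Player~$1$ deterministically visits the corresponding literals and every clause player votes yes. Player~$1$ then receives $1$ and cannot improve. Each clause player has already visited its goal before reaching the vote, so its payoff is $1$ and no deviation helps. The profile is therefore a (pure, memoryless) Nash equilibrium, and hence also an $\varepsilon$-equilibrium, and it satisfies $C$. ($\Leftarrow$) Take any memoryless equilibrium with $\rho_1=1$. Player~$1$ only wins through the unanimous-yes row, so almost surely play reaches the purple state and everyone votes yes there with probability $1$. Hence every clause player's vote at the purple state is pure yes. Player~$1$'s choices induce a distribution over paths, and each path corresponds to an assignment. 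Suppose some path occurring with positive probability leaves clause $C_{i-1}$ unsatisfied. Then Player~$i$ can deviate with a history-dependent strategy that votes no exactly on that history. This is allowed, since deviations are arbitrary strategies. On that history the deviation raises Player~$i$'s payoff from $0$ to $\varepsilon$, weighted by the path's probability.

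\textbf{Main obstacle.} The hard step is the $\varepsilon$-equilibrium case. There the gain from deviating is $p\cdot\varepsilon$, where $p$ is the probability of the bad path, and this is strictly less than $\varepsilon$ whenever Player~$1$ randomizes. A randomizing Player~$1$ could therefore hide an unsatisfying assignment inside a mixture. I would fix this by making the consolation gadget's reward large relative to $\varepsilon$. Concretely, the gadget gives each clause player its goal with a probability $\gamma$ strictly greater than $\varepsilon\cdot 2^{\ell v}$ and at most $1$. Because $\pi_1$ is $\ell$-bit-representable, every positive path probability is at least $2^{-\ell v}$, so the deviation gain $p\gamma$ exceeds $\varepsilon$. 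I would also enlarge $\ell$ only as needed and treat $\varepsilon\ge 1$ as a trivial edge case. An alternative fix is to let Player~$1$ commit purely by forcing $\ell$ small enough that non-degenerate mixtures are impossible. Either way, the conclusion is that every positive-probability path satisfies $\phi$, so $\phi$ is satisfiable. This gives NP-hardness, and together with the upper bounds, NP-completeness.
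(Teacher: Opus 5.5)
Your reduction is the paper's: the same game $\mathbb{G}_\phi$, the constraint $l_1=u_1=1$, a unanimous-yes row versus a default row that leads to a consolation gadget, and NP membership from the four upper-bound theorems. For the Nash variants your argument is complete. You go beyond the paper in the $\varepsilon$ case, and your diagnosis there is sharper than the paper's one-line justification that clause players ``would be choosing between a payoff of $0$ and $\varepsilon>0$''. Under the additive definition, a deviation that gains exactly $\varepsilon$ is tolerated, since $0+\varepsilon\ge\varepsilon$. So a gadget paying exactly $\varepsilon$ never produces a violation, even when Player~$1$ is pure, and randomization by Player~$1$ only shrinks the gain further. Some amplification with $\gamma>\varepsilon$ is therefore genuinely needed.

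However, your repair is not consistent as stated.
\begin{itemize}
\item You cannot both ``use the input $\varepsilon$'' and choose $\gamma$ with $\varepsilon 2^{\ell v}<\gamma\le 1$. That requires $\varepsilon<2^{-\ell v}$, so the problematic boundary is not $\varepsilon\ge 1$, and enlarging $\ell$ makes things worse. In a reduction from 3SAT you output $\varepsilon$ yourself, so simply set, for example, $\gamma=1$ and $\varepsilon=2^{-\ell v-1}$. This has polynomially many bits, since $\ell$ is a constant you choose.
\item Your ``alternative fix'' of forcing Player~$1$ to be pure fails on its own if the gadget still pays $\varepsilon$, for exactly the reason above. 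A pure unsatisfying assignment with unanimous yes votes is then an $\varepsilon$-equilibrium with $\rho_1=1$. The fix works only together with $\gamma>\varepsilon$ (e.g.\ $\gamma=1$, $\varepsilon=1/2$). Even then, whether $\ell=2$ excludes every non-degenerate mixture depends on the bit-counting convention.
\item The bound ``every positive path probability is at least $2^{-\ell v}$'' assumes Player~$1$'s probabilities are $\ell$-bit rationals. In $\mathbb{Q}(r_1\ldots r_m)$ with a nontrivial basis, $\ell$-bit coefficients can yield positive probabilities far below $2^{-\ell}$. The norm argument of Theorem~\ref{thm:compare-Qext} gives only a bound of the form $2^{-p}$ with $p$ polynomial. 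For those variants, either output the trivial basis $\{1\}$, or choose $\varepsilon<2^{-pv}$ and keep $\gamma=1$.
\end{itemize}
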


\section{The Unrestricted $\mathbb{R}$ Setting}\label{sec:noell}

In this section, we discard the $\ell$-bit representability assumption. This setting, combined with our memoryless strategy restriction and constrained payoffs, yields a problem that has been studied in the literature previously. Specifically, in~\cite{etrstatne}, this is called the ``StatNE'' problem, and the authors argue that it lies in the complexity class $\exists\mathbb{R}$, the ``existential theory of the reals'', see~\cite{complexityzoo} for more information.  While the precise statement in~\cite{etrstatne} is that the problem lies in PSPACE, this follows directly from the $\exists\mathbb{R}$ upper bound they argue for and the fact that $\exists\mathbb{R} \subseteq \mathrm{PSPACE}$~\cite{etrpspace}. This result is expanded on in~\cite{recurisveetr}, where the authors show an $\exists\mathbb{R}$-hardness result for the same problem in turn-based games. As mentioned in Remark~\ref{rem:turn-based-encoding}, such games can be encoded using our compressible table (Definition~\ref{compressibletabledef}) formulation, implying that the corresponding Nash equilibrium realizability problem is $\exists\mathbb{R}$-complete.

The main contribution of this section is to repair the construction presented in the proof of Theorem 8 in~\cite{etrstatne}, a foundational result in the literature that established an $\exists\mathbb{R}$ upper bound for the stationary Nash equilibrium realizability problem with constrained payoffs and has led to several follow-up works such as~\cite{recurisveetr,stationaryepsinturnbased}, both of which repeat the claims of Theorem 8 in~\cite{etrstatne} as given. With the $\exists\mathbb{R}$ upper bound repaired, it can be combined with Theorem 2 of~\cite{recurisveetr}, which establishes a matching $\exists\mathbb{R}$ lower bound for the final completeness result of this paper. It should be noted that~\cite{stationaryepsinturnbased} deals with a promise version of the realizability problem and is therefore not exactly equivalent, but the point still extends to their Proposition 7, which remakes the construction presented in Theorem 8 of~\cite{etrstatne} from a promise version perspective but admits the same mistake. We now proceed by explaining the mistake in Theorem 8.

\subsection{Prior Construction and Counterexample}

As mentioned before, there exists a 3-player turn-based game with terminal rewards that admits no Nash equilibrium. The game, which was first introduced in~\cite{nostationaryNEkuipers}, is encoded into the state-based terminal reward setting in Proposition 3.13 of~\cite{ummelsthesis} (specifically, Figure 3.3). We denote this game as $G_3$, and adopt the notation of~\cite{ummelsthesis}, referring to the 4 non-terminal states of $G_3$ as $v_0, v_1, v_2$ and $v_3$, with $v_3$ as the uncontrolled start state.

We now construct the three-player turn-based game $G'$, shown in Figure~\ref{fig:counterexamplegame}. It is a relatively simple extension of $G_3$, adding two new states, $s_0$ and $T$. The state $s_0$ is the new initial state, owned by Player $1$.  The new state $T$ is a reachability target for each of the three players.
\begin{figure}
\centering
\includegraphics[width=0.5\textwidth]{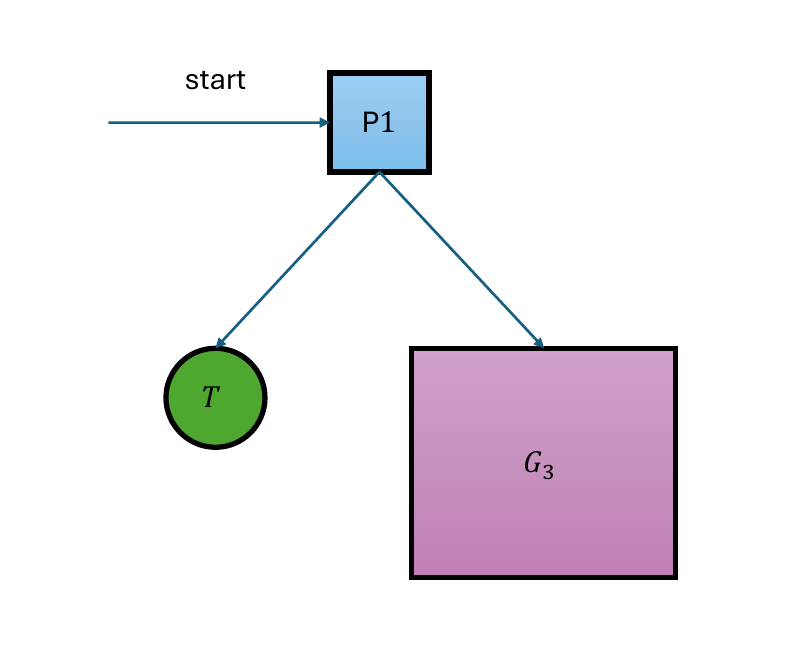}
\caption{The counterexample game $G'$}
\label{fig:counterexamplegame}
\end{figure}

The dynamics of the game are simple. At the start of the game, Player $1$ can either choose to go to $T$ or to go to $v_3$, the start state of $G_3$. These transitions are deterministic, and therefore, the system as a whole is deterministic.  Clearly, there is a memoryless Nash equilibrium in $G'$, which is realized when Player $1$ goes to $T$. This provides the maximal payoff to all players simultaneously, automatically satisfying the no-deviation condition of Nash equilibria. It is important to note that this is the \emph{only} memoryless Nash equilibrium that \emph{can} exist -- if payoff constraints disqualify it, then there is simply no memoryless Nash equilibrium at all. 

Alternatively, Player $1$ can go to $v_3$, and all three players then play $G_3$ a game that is known to have no memoryless Nash equilibrium. It is here that we see a contradiction with the $\exists\mathbb{R}$ sentence constructed in the proof of Theorem 8 of~\cite{etrstatne}. In said $\exists\mathbb{R}$ sentence, the authors use the Bellman optimality equations over all states simultaneously in the constructed MDPs (similar to our construction of $\mathbb{G}_i \times \pi$) to characterize the Nash equilibrium condition. This, however, is too strong a condition. No matter what memoryless strategy profile $\pi$ we consider, the Bellman optimality conditions in $G_3$ will be violated in at least one player's MDP. If this were not the case, then no player would have an incentive to deviate from the candidate strategy profile, meaning that this candidate would be a memoryless Nash equilibrium in $G_3$, contradicting known results.  Therefore, the $\exists\mathbb{R}$ sentence constructed in Theorem 8 of~\cite{etrstatne} must return false, but, as demonstrated earlier, there is a memoryless Nash equilibrium in $G'$. The key point here is that Bellman optimality need not be considered for the states in $G_3$, as the stationary Nash equilibrium strategy profile does not visit $G_3$ with positive probability. In order to repair the result and encode this constraint into our new $\exists\mathbb{R}$ sentence, we therefore develop a reachability predicate for $\exists\mathbb{R}$. 

\subsection{Corrected Construction}

In order to mend the construction of~\cite{etrstatne}, we must restrict the Bellman conditions to only apply to states $R_{\pi}$ that are reachable with positive probability under the guessed strategy profile $\pi$. Therefore, while the \emph{value} of each state is computed just as in~\cite{etrstatne}, the Bellman optimality equations that reason about the value at $q_0$ specifically are only constructed with respect to the set $R_{\pi}$, not the entire state space $Q$. 

\begin{observation}~\cite{SCHWEITZER1976173}
    Given an MDP $M$, a policy $\pi$, and a state $q_0$, the policy achieves the optimal value at $q$ if it satisfies the Bellman optimality equations at all states in $R_{q}^{\pi}$, the set of states in $M$ that are reachable with positive probability in $M$ from $q$ when the policy $\pi$ is followed. In our formulation, $q$ is always the initial state $q_0$ and is therefore not specified. 
\end{observation}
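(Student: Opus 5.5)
The plan is to argue directly from the characterization of optimal values for reachability MDPs, restricted to the states the policy actually visits. Fix the MDP $M$ with state space $Q$, target set $R$ (say $R_i$), and write $v^*(q)$ for the optimal reachability value from $q$. The Bellman optimality equations at a state $q$ say $v(q) = 1$ if $q \in R$, and otherwise $v(q) = \max_{a} \sum_{q'} h(q,a,q') v(q')$. The statement requires the optimality equations to hold with $v = v^*$ and the action prescribed by $\pi$ to attain the maximum. That is, at each $q \in R^{\pi}_{q_0}$ we have $v^*(q) = \sum_{q'} h(q,\pi(q),q') v^*(q')$, with $\pi(q)$ a maximizer. (If $\pi$ is randomized, every action in its support is a maximizer.) The goal is to show $v^\pi(q_0) = v^*(q_0)$, where $v^\pi$ is the value of the Markov chain induced by $\pi$.

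\textbf{Step 1: restrict to a closed sub-chain.} The set $S = R^{\pi}_{q_0}$ is closed under $\pi$: every successor with positive probability of a state in $S$ is again in $S$. So the Markov chain induced by $\pi$ from $q_0$ lives entirely in $S$. On $S$, the function $v^*$ satisfies $v^* = P_\pi v^*$ off $R$ and $v^* = 1$ on $R$. This says $v^*$ is harmonic for the chain with boundary value $1$ on $R$.

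\textbf{Step 2: compare $v^*$ with $v^\pi$.} The function $v^\pi$ is the \emph{least} nonnegative solution of the same harmonic system on $S$. Hence $v^\pi \le v^*$ on $S$, which is automatic anyway since $v^\pi \le v^*$ always. The danger is strict inequality: a non-minimal solution can arise from closed recurrent classes inside $S \setminus R$ on which $v^*$ is a positive constant but $\pi$ never reaches $R$. This is the classical gap where ``Bellman optimality implies optimality'' fails for positive/reachability models, and it is the main obstacle. I would handle it in two parts.

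\textbf{Step 3: strip out the problematic recurrent classes.} First, pre-process as in the paper's Markov-chain analysis: identify the states with $v^* = 0$, which are computable graph-theoretically. Second, argue about the chain run from $q_0$. With probability one it either hits $R$ or is absorbed in a bottom strongly connected component $B \subseteq S \setminus R$. On such a $B$, harmonicity forces $v^*$ to be constant, say $c$. I would then require, as in~\cite{SCHWEITZER1976173}'s formulation of optimality on the reachable set, that $v^*$ vanish on absorbing classes that avoid the target. Concretely, I would impose the standard normalization that the value vector is the least fixed point. Equivalently, the Bellman system is stated after collapsing the zero-value states into a sink, which makes the harmonic solution on $S$ unique. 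Then optional stopping applies: $v^*(X_n)$ is a bounded martingale under $\pi$ up to hitting $R$. It converges to $1$ on hitting $R$ and to $0$ on absorption in a bottom class. This gives $v^*(q_0) = \Pr^\pi(\text{reach } R) = v^\pi(q_0)$.

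\textbf{Step 4: explain why states outside $S$ do not matter.} Note that only states in $S$ were used in this argument, so the Bellman conditions outside $R^{\pi}_{q_0}$ are irrelevant. This is exactly the point needed to repair the $\exists\mathbb{R}$ sentence for the counterexample $G'$.
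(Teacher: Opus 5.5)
\paragraph{Gap in Step 3.} Your Steps 1--2 and your diagnosis of the danger are correct, but Step 3 does not close the gap; it assumes it away. Carried out honestly, your optional-stopping argument on the closed set $S=R^{\pi}_{q_0}$ gives
\[
v^*(q_0) \;=\; \mathbb{P}^{\pi}_{q_0}(\text{reach } R) \;+\; \sum_{B} c_B\, \mathbb{P}^{\pi}_{q_0}(\text{absorbed in } B \text{ without reaching } R).
\]
Here $B$ ranges over the bottom classes of the $\pi$-chain inside $S\setminus R$, and $c_B$ is the constant value of $v^*$ on $B$. So $v^\pi(q_0)=v^*(q_0)$ holds exactly when every such class entered with positive probability has $c_B=0$. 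Your hypothesis (that $\pi$ attains the maximum in the Bellman equation for $v^*$ at every state of $S$) does not force this.

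The normalizations you call ``equivalent'' do not force it either. First, $v^*$ is already the least fixed point of the nonlinear operator $v\mapsto\max_a P_a v$, and that says nothing about the linear operator $P_\pi$. Second, collapsing $\{v^*=0\}$ into a sink leaves every class with $c_B>0$ untouched, so the harmonic system for $P_\pi$ on $S$ remains non-unique.

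\paragraph{The statement fails under your formalization.} Take states $q_0,T$ with $R=\{T\}$. At $q_0$, one action self-loops and another moves to $T$; let $\pi$ choose the self-loop. Then:
\begin{itemize}
\item $R^{\pi}_{q_0}=\{q_0\}$;
\item $v^*\equiv 1$ is the \emph{unique} fixed point in $[0,1]^Q$, and no state has value $0$;
\item the self-loop attains $\max\{v^*(q_0),v^*(T)\}=1$;
\item yet $v^\pi(q_0)=0$.
\end{itemize}
So the missing ingredient is a genuine extra hypothesis, not a normalization. You have two options.

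\textbf{Option 1: add an equalizing condition.} Require explicitly that the $\pi$-chain from $q_0$ reaches $R\cup\{v^*=0\}$ almost surely. With this, your martingale argument is complete.

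\textbf{Option 2: use a stronger optimality system.} For example, take the multichain average-reward pair of gain and bias equations from the limit-average setting of the construction being repaired, with $T$ absorbing and rewarding. The bias equation for the self-loop at $q_0$ would read $1+h(q_0)=h(q_0)$, so the self-loop is not conserving. Conservation of both equations on the closed set $S$ gives $P_\pi g^*=g^*$ and $r_\pi=g^*+h-P_\pi h$ on $S$. Hence $\frac1N\sum_{n<N}P_\pi^n r_\pi=g^*+\frac1N(h-P_\pi^N h)\to g^*$ on $S$.

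The paper itself gives no argument here; it imports the observation from the cited reference. Whichever extra condition you adopt therefore has to be stated as part of the hypothesis, not derived from properties of $v^*$.
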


Restricting the Bellman optimality equations to just the set $R_i$ in Figure~\ref{fig:counterexamplegame} means that the Bellman optimality conditions for the states in $G_3$ are correctly ignored, as $G_3$ is not reachable under the guessed Nash equilibrium strategy profile $\pi$, which simply moves to $T$. 

Therefore, in order to mend the construction of~\cite{etrstatne}, we create a polynomial-sized reachability predicate in $\exists\mathbb{R}$. This predicate needs to be created in the $\exists \mathbb{R}$ sentence itself, as we existentially quantify over the strategies in the sentence and can therefore not reference the strategies that determine the reachability sets outside the $\exists \mathbb{R}$ sentence. Once the transition matrix $M$ of $\mathbb{G} \times \pi$ has been computed (using the strategy profile resulting from the existential quantification), which can be done using polynomially many symbolic calculations ( as argued previously in Sections~\ref{sec:Q} and~\ref{sec:Qext} ) we can define a reachability predicate as follows:

$$\mathrm{Reach}(q_1, q_2) := \bigvee_{k \in {0, 1, \ldots |Q|}} M^k[q_1,q_2] > 0$$

by leveraging the fact that, in a Markov Chain (an MDP with a fixed policy $\pi$), if a state $q_2$ is reachable from a state $q_1$ with positive probability, then it is reachable in $k$ steps for some $0 \leq k \leq |Q|$. This allows us to use ``exponentiation'' in our reachability predicate, as $M^k$ can be explicitly expanded as $M \cdot M \cdot M \ldots $ ($k$ times). This reachability predicate, therefore, has polynomial size with respect to $\mathbb{G}$. Note that we have made no attempt to optimize this predicate; our only goal is to show the existence of such a polynomial-sized predicate. This predicate can then be inserted into the sentence in Theorem 8 of~\cite{etrstatne}, which analyzes the MDPs $\mathbb{G}_i \times \pi$ through Bellman conditions on $\mathbb{G} \times \pi$, to correct the result by only considering the Bellman conditions over states that are reachable from the initial state $q_0$.  Clearly, this corrected sentence can then be easily modified to fit the $\varepsilon$-equilibrium concept. 

\begin{theorem}\label{thm:fixed-ETR}
    The Nash Equilibrium and $\varepsilon$-Equilibrium Realizability problems over $\mathbb{R}$ belong to $\exists\mathbb{R}$.
\end{theorem}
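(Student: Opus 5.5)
We will show membership in $\exists\mathbb{R}$ by writing, in polynomial time from $\mathbb{G}$, $C$ (and $\varepsilon$), a single existential first-order sentence over $(\mathbb{R},+,\cdot,<)$ that holds iff a constrained memoryless (Nash or $\varepsilon$-) equilibrium exists.

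\textbf{Variables.} The sentence existentially quantifies over the following reals.
\begin{itemize}
\item For every player $j$, state $q$ and action $a$, a variable $p_{j,q,a}$ for the strategy probabilities. We add the linear constraints $p_{j,q,a}\ge 0$ and $\sum_a p_{j,q,a}=1$.
\item For every player $i$ and state $q$, a variable $v_{i,q}$ for the compliance value $\rho_i$ started at $q$.
\item For every player $i$ and state $q$, a variable $w_{i,q}$ for the optimal deviation value in $\mathbb{G}_i\times\pi$ started at $q$.
\end{itemize}

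\textbf{Transition terms.} The entries $E(q_1,q_2)$ of $\mathbb{G}\times\pi$ and $h(q_1,\overline a,q_2)$ of $\mathbb{G}_i\times\pi$ are written as polynomial terms in the $p$'s, following the formulas of Section~\ref{sec:constructions}. Each is a sum over the rows of $R_\delta(q_1)$ of products of at most $m$ strategy variables, multiplied by rational constants from $\delta$. The blank row is handled as $1$ minus the sum of the other pattern probabilities. This gives polynomially many terms, each of polynomial size.

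\textbf{Characterizing the values.} Polynomial equations alone do not pin down hitting probabilities, because the fixed-point equations have spurious solutions on states from which $R_i$ is unreachable. We fix the least fixed point as follows.
\begin{itemize}
\item \emph{Qualitative reachability.} We use the predicate $\mathrm{Reach}$ from the previous subsection, with $M=E$ expanded symbolically; it has polynomial size. We also build an analogous predicate $\mathrm{Reach}_i$ over the MDP graph, where an edge exists if $h(q_1,\overline a,q_2)>0$ for some $\overline a$.
\item \emph{Compliance values.} We impose $v_{i,q}=1$ for $q\in R_i$, and $v_{i,q}=0$ whenever $\neg\mathrm{Reach}(q,R_i)$. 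Otherwise we impose $v_{i,q}=\sum_{q'}E(q,q')v_{i,q'}$. Once the zero-states are removed, the linear system has a unique solution.
\item \emph{Deviation values.} We impose $w_{i,q}\ge \sum_{q'}h(q,\overline a,q')w_{i,q'}$ for all $\overline a$, together with equality for some $\overline a$ (a finite disjunction). We also impose $w_{i,q}=0$ when $R_i$ is unreachable from $q$ in the MDP graph, and $w_{i,q}=1$ on $R_i$.
\end{itemize}

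\textbf{Main obstacle.} The hardest step is making the deviation values $w$ exactly the optimal values rather than some other fixed point. With the zero-states fixed, a Bellman solution that is maximal is still not automatically unique: end components can support spurious values. I would handle this by guessing the witness policy inside the sentence. For each state $q$ and player $i$ we use a disjunction over actions $\overline a$ expressing that $\overline a$ attains the maximum. We then assert that the Markov chain induced by these chosen actions reaches $R_i$ from every state with $w_{i,q}>0$; this is again a $\mathrm{Reach}$-style predicate on the induced matrix.

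\textbf{Why this pins down $w$.} Consider any solution of these constraints. Because $w$ is super-harmonic for every action, $w$ is an upper bound on the value. Because the chosen policy is positively absorbing and $w$ is harmonic for it, $w$ equals the payoff of that policy. So $w$ is exactly the optimal value.

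By the observation of~\cite{SCHWEITZER1976173}, and since only the value at $q_0$ matters, these deviation constraints could in principle be restricted to states reachable in $\mathbb{G}_i\times\pi$ from $q_0$. Since they describe the true values, imposing them everywhere is also sound. The essential correction to~\cite{etrstatne} lies elsewhere: we never require $\pi$ itself to satisfy Bellman optimality at states outside $R_\pi$.

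\textbf{Equilibrium and payoff conditions.} We then assert:
\begin{itemize}
\item $w_{i,q_0}\le v_{i,q_0}$, or $w_{i,q_0}\le v_{i,q_0}+\varepsilon$, or $w_{i,q_0}\le(1+\varepsilon)v_{i,q_0}$, according to the equilibrium notion;
\item $l_i\le v_{i,q_0}\le u_i$ for every $i$.
\end{itemize}

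\textbf{Correctness and size.} Memoryless deterministic policies suffice for an MDP reachability objective~\cite{Puterman94}. Hence comparing against $w_{i,q_0}$ correctly quantifies over arbitrary, possibly history-dependent, deviations. In the counterexample $G'$, the states of $G_3$ then impose nothing on $\pi$, because no Bellman constraints are placed on $\pi$ there. All pieces have polynomial size, so the sentence is a polynomial-time reduction to $\exists\mathbb{R}$. This proves Theorem~\ref{thm:fixed-ETR} for both equilibrium notions.
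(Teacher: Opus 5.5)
Your proof is correct, and it takes a genuinely different route from the paper's. The paper does not rebuild the sentence. It keeps the sentence of Theorem~8 of~\cite{etrstatne}, including its computation of values. It then only restricts that sentence's Bellman optimality conditions on the guessed profile to the states reachable from $q_0$ in $\mathbb{G}\times\pi$, using the predicate $\mathrm{Reach}$ and the observation of~\cite{SCHWEITZER1976173}.

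You impose no optimality condition on $\pi$ at any state. Instead you pin down two things exactly:
\begin{itemize}
\item the compliance values, where qualitative reachability removes the spurious solutions of the linear system;
\item the optimal deviation values, where superharmonicity gives $w\ge\mathrm{val}$, and a guessed conserving policy that reaches $R_i$ from every state with $w>0$ gives $w\le\mathrm{val}$, which handles end components correctly.
\end{itemize}
You then compare the two only at $q_0$. In your sentence, $\mathrm{Reach}$ serves to select least fixed points, not to decide where optimality is demanded.

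The paper's patch is shorter and isolates precisely why the old sentence was too strong. However, it takes over the value computation of~\cite{etrstatne} without re-deriving it. Its correctness also depends on choosing the restriction set exactly. For instance, $\mathrm{Reach}$ on $\mathbb{G}\times\pi$ also marks states visited only after $R_i$ has been hit, and there Player~$i$'s strategy need not be optimal in an equilibrium. Your version is self-contained and needs no such bookkeeping, at the cost of the witness-policy machinery.

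Two routine encoding points should be made explicit. First, the chosen actions are shared between the local maximality conditions and the global reachability test on the induced chain. Read literally, your per-state disjunctions therefore expand into a disjunction over exponentially many policies. Introduce auxiliary variables instead, e.g.\ $y_{i,q,\overline{a}}\ge 0$ with $\sum_{\overline{a}}y_{i,q,\overline{a}}=1$ and $y_{i,q,\overline{a}}\bigl(w_{i,q}-\sum_{q'}h(q,\overline{a},q')w_{i,q'}\bigr)=0$. Take $\sum_{\overline{a}}y_{i,q,\overline{a}}\,h(q,\overline{a},q')$ as the induced matrix; your harmonicity and absorption argument is unchanged. Second, the entries of $M^k$ in $\mathrm{Reach}$ should be auxiliary variables constrained by $M^k=M^{k-1}M$. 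Unfolding the matrix product into a single term is exponential in $k$, and this applies to the paper's predicate as well.
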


The upper bound for the Nash equilibrium can then be combined with the $\exists\mathbb{R}$ lower bound of~\cite{recurisveetr}, which was shown for turn-based systems (see Remark~\ref{rem:turn-based-encoding}), to get the following completeness result.

\begin{theorem}(Lower Bound from~\cite{recurisveetr})
    The Nash Equilibrium Realizability problem over $\mathbb{R}$ is $\exists\mathbb{R}$-complete. 
\end{theorem}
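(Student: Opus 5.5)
The statement is $\exists\mathbb{R}$-completeness of the Nash Equilibrium Realizability problem over $\mathbb{R}$. The plan is to combine the two directions already available.

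\emph{Upper bound.} Membership in $\exists\mathbb{R}$ is exactly Theorem~\ref{thm:fixed-ETR}. Recall its ingredients:
\begin{itemize}
\item existentially quantified variables for the memoryless strategy probabilities $\pi_j(q)(a)$, with nonnegativity and sum-to-one constraints;
\item polynomial expressions for the entries of $\mathbb{G}\times\pi$ and $\mathbb{G}_i\times\pi$, built exactly as in Section~\ref{sec:constructions} (products over players of strategy variables times the rational entries of the compressible table);
\item variables for the hitting probabilities, constrained by the usual linear equations;
\item value variables for each MDP $\mathbb{G}_i\times\pi$, with Bellman optimality imposed only on states satisfying the $\mathrm{Reach}$ predicate;
\item the equilibrium condition at $q_0$ and the payoff constraints $l_i\le\rho_i\le u_i$.
\end{itemize}
Each component has size polynomial in $\mathbb{G}$, so no further argument is needed here beyond citing that theorem.

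\emph{Lower bound.} I would invoke Theorem 2 of~\cite{recurisveetr}, which shows $\exists\mathbb{R}$-hardness of the memoryless (stationary) constrained Nash equilibrium problem for turn-based games with reachability or terminal-reward objectives. The step to check is that their instances translate into ours in polynomial time and preserve the answer.
\begin{itemize}
\item By Remark~\ref{rem:turn-based-encoding}, each state owned by Player~$i$ is encoded with one row per action of Player~$i$, with ${\ast}$ in all other coordinates. The compressible table therefore has size linear in the turn-based game.
\item If their objectives are stated as terminal rewards, I convert them to reachability goals with the probabilistic gadgets described in Section~\ref{sec:stateoftheart}. A reward $p$ becomes a chance state reaching $R_i$ with probability $p$, and the gadgets for different players are chained. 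This uses polynomially many rational transition probabilities.
\item Payoff constraints carry over directly, since the gadgets preserve each player's expected payoff exactly.
\item Memorylessness is preserved in both directions: gadget states are uncontrolled, so a player's memoryless strategy restricted to the original states determines its behaviour.
\item Deviations by arbitrary strategies have the same values in both games, because the gadgets do not change the optimal reachability values of any player.
\end{itemize}

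The main obstacle is this faithfulness check for the gadget conversion, in particular that no player gains new deviation opportunities and that existence of a constrained memoryless Nash equilibrium is preserved. Once it holds, hardness plus Theorem~\ref{thm:fixed-ETR} gives completeness.
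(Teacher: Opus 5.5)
Your proposal is correct and takes the same route as the paper. Membership comes from Theorem~\ref{thm:fixed-ETR}, and hardness is imported from the turn-based result of~\cite{recurisveetr} via the compressible-table encoding of Remark~\ref{rem:turn-based-encoding}, with your faithfulness check for the reward-to-reachability gadgets going beyond what the paper spells out; one small correction is that if rewards must be normalized into $[0,1]$, the payoff constraints must be rescaled by the same factor rather than carried over unchanged.
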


The lower bound construction utilized in~\cite{recurisveetr} does not readily apply to the $\varepsilon$-equilibrium. In fact, in~\cite{stationaryepsinturnbased}, it is conjectured that the $\varepsilon$-equilibrium realizability problem may in fact belong to NP, as only polynomially-sized probabilities may be needed to satisfy the $\varepsilon$-equilibrium condition. We leave this conjecture for future work, leaving the precise complexity of the $\varepsilon$-equilibrium realizability problem over $\mathbb{R}$ open.  Another interesting question concerns the setting without payoff constraints, as the unconstrained payoff $\varepsilon$-equilibrium realizability problem over $\mathbb{R}$ may actually be trivial -- it may be the case that memoryless $\varepsilon$-equilibria always exist. This reasoning does not extend to the Nash equilibrium, as $G_3$ is an example of a game with no memoryless Nash equilibrium.

\section{Discussion}\label{sec:discussion}

The main technical results of this paper are Theorem~\ref{thm : all-NP-complete} and Theorem~\ref{thm:fixed-ETR}. Together, they present the following picture : for the $\ell$-bit-representable cases, both the Nash equilibrium and $\varepsilon$-equilibrium realizability problems are NP-complete. For the unrestricted $\mathbb{R}$ case, the Nash equilibrium problem is $\exists\mathbb{R}$-complete. The corresponding $\varepsilon$-equilibrium belongs to $\exists\mathbb{R}$ as well, but more work must be done to characterize this problem as either $\exists\mathbb{R}$-hard or not.

Taken together, these results present an interesting hierarchy of numerical reasoning. Although it seems that there is something of a ``jump'' between the NP-completeness of the $\mathbb{Q}(r_1 \ldots r_m)$ Nash equilibrium realizability problem and the $\exists\mathbb{R}$-completeness of the $\exists \mathbb{R}$-complete Nash equilibrium realizability problem, the representations we used in this paper paint a subtler story. Specifically, the NP-completeness result for $\mathbb{Q}(r_1 \ldots r_m)$ hinges on the fact that the extension is specified through its basis, not by its generators. If we had shown an NP upper bound for the problem variant in which the extension was specified by its generators, then this would provide a plausible route to showing that NP $=\exists\mathbb{R}$. The high-level idea would be that even the correct generators themselves could be guessed in NP, collapsing $\exists\mathbb{R}$ to NP by showing that the $\exists\mathbb{R}$-complete problem of Nash equilibrium realizability belongs to NP. In order to establish such a claim, however, it must be shown that the $\ell$-bit property could be preserved properly, which we leave for future work. Thus, we have an interesting picture in which increasing the number of generators potentially leads to more meaningful and accurate answers but also potentially blows up the basis, and, therefore, the search space, exponentially. 

As mentioned before, this greater accuracy plays a much more impactful role with $\varepsilon$-equilibria than with Nash equilibria. It is precisely this difference that leads to the uncertain nature of the exact complexity $\varepsilon$-realizability problem over $\mathbb{R}$. Because we are working with a fixed basis, even if we do not pinpoint the specific algebraic numbers needed to exactly solve for an $\varepsilon$-equilibrium, we may be able to get a much better approximation to the exact solution (should it exist) than by just using rational numbers. This can be seen by applying a statement like Drichlet's Approximation Theorem~\cite{hardy75}. Because the bounds of approximation are potentially tightened, this allows for feasible solutions with even very small $\varepsilon$, which, in practice, are interchangeable with exact solutions. The exact analysis of the magnitude of these improvements is outside the scope of this paper and left for future work. 

\section{Conclusion}\label{sec:conc}

In this paper, we studied the computational complexity of the Nash and $\varepsilon$-realizability problems under different numerical restrictions. In addition to the technical results, this paper makes two important contributions to the literature.

First, it introduces the compressible table definition and carefully analyzes the complexity of the Markov Chain and Markov Decision Process constructions that result from the compressible table format. While analyzing multiplayer state-based games using MCs and MDPs is a well-known idea in the literature, the compressible table provides an important bridge between the turn-based games previously studied in works such as~\cite{recurisveetr,etrstatne,stationaryepsinturnbased} and the general concurrent setting. Crucially, the compressible table allows us to maintain the polynomial time bounds for computing transition probabilities that are often taken for granted in turn-based settings by explicitly taking into account the number of rows present in the table. Furthermore, it allows for succinct representations for turn-based games as well, allowing us to build upon previous works. Therefore, the compressible table ``threads the needle'' in a sense, in that it both allows for succinct representation for certain important subclasses of games, such as turn-based games, while not blowing up the complexity of querying such a table beyond PTIME. If, for example, we represented the transition table as a Bayesian Network instead, then while this would allow for succinct representations for a wider class of games, it would also mean that simply querying the table would be at least PP-hard~\cite{bayesianinferencecomplexity}; see~\cite{RRV26a} for further discussion of this point. 

Second, it provides a thorough literature review of an important topic that the literature has veered away from. In part, this is due to the state of the literature,  which makes it difficult for researchers to tell what is known. We provide a through review of the literature in Section~\ref{sec:stateoftheart}, providing a centralized reference for the current state of the art that, to the best of our knowledge, did not exist before. In addition to organizing popular claims and refutations, we correct a further long-standing mistake in the literature in Section~\ref{sec:noell}. This is done with the hope that researchers looking to continue this line of work have a centralized starting point from which to survey known results.

Finally, this brings us to the technical contributions of this paper. Our first contribution was to show that calculations over the bounded $\mathbb{Q}$ or in the fixed-point precision setting could be performed in polynomial time. Therefore, we can guess a strategy profile in non-deterministic polynomial time and verify it in polynomial time. This result aligns well with intuition given the state of the literature, but there are still many technical details to work out in the relatively new concurrent setting. The focus on $\ell$-bit-representability allowed us to conduct precise numerical analysis, a detail that is often taken for granted in the literature.

We then extended this reasoning to field extensions of $\mathbb{Q}$, providing an entirely new setting for analysis. While the NP bounds were maintained, they were done with respect to the explicit representation of the basis. Note that this means the radicals cannot be guessed a priori in non-deterministic polynomial time so easily, as constructing the basis from the guessed radicals could take exponential time. These results were established in order to highlight the interesting role that numerical complexity plays in the realizability problems, as we see a clear tradeoff between the reasoning power of $\mathbb{Q}(r_1 \ldots r_n)$ and the computational complexity of working in such a large field. Furthermore, we extended our numerical analysis techniques to this more complicated group of fields. All relevant problems were then given a matching lower bound, allowing us to characterize them all as NP-complete simultaneously.

After this, we removed the $\ell$-bit-representability restriction completely, electing to work in $\mathbb{R}$. This setting aligned closely with other settings in the literature, but there was still work to be done. Namely, we presented a correction to Theorem 8 in~\cite{etrstatne}, a long-standing result that inspired several follow-up works such as~\cite{recurisveetr}. While we were able to provide tight complexity results for the Nash equilibrium, the $\varepsilon$-equilibrium problem was left open, as we believe numerical analysis plays the most critical role here. This problem represents the most important direction for future work, as a tight complexity result would crystallize the interplay between numerical complexity and the Nash and $\varepsilon$ equilibria solution concepts, the main conceptual point of this paper.

\bibliography{bib}
\bibliographystyle{abbrv}

\end{document}